\documentclass{ifacconf}

\usepackage{graphicx}      
\usepackage{natbib}        
\usepackage{amsmath,amssymb,amsfonts}
\usepackage{algorithmic}
\usepackage{graphicx}
\usepackage{textcomp}
\usepackage{epstopdf}
\usepackage{dsfont}
\usepackage{float}
\usepackage{subfigure}
\usepackage{bm}
\usepackage{easyReview}
\usepackage{mathtools}

\newcommand{\SO}{\operatorname{SO}}
\renewcommand{\so}{\mathfrak{so}}
\newcommand{\SE}{\operatorname{SE}}
\newcommand{\se}{\mathfrak{se}}

\DeclareMathOperator{\tr}{tr}
\DeclareMathOperator{\dom}{dom}  
\DeclareMathOperator{\Ad}{Ad}  
\DeclareMathOperator{\ad}{ad}

\DeclareMathOperator{\blkdiag}{blkdiag}
\DeclareMathOperator{\nulls}{Null}
\DeclareMathOperator{\rank}{rank}
\DeclareMathOperator{\kg}{kg}
\DeclareMathOperator{\m}{m}
\newcommand{\T}{^{\top}} 
\newcommand{\ie}{\textit{i.e.}}
\allowdisplaybreaks

\begin{document}
	\begin{frontmatter}
		
		\title{Distributed Hybrid Feedback for Global Pose Synchronization of Multiple Rigid Body Systems on $SE(3)$} 
		
		\thanks[footnoteinfo]{This work was supported in part by the National Natural Science Foundation of China under Grant 62403205, and in part by the National Sciences and Engineering Research Council of Canada, under Grant NSERC-DG RGPIN-2020-06270.}
		
		\author[First]{Fengyu Lin} 
		\author[First]{Miaomiao Wang} 
		\author[First]{Housheng Su}
		\author[Second]{Abdelhamid Tayebi}
		
		\address[First]{School of Artificial Intelligence and Automation, Huazhong University of Science and Technology, Wuhan 430074, China (e-mail: \{linfy2024, mmwang, shs\}@hust.edu.cn).}
		\address[Second]{Department of Electrical Engineering, Lakehead University, Thunder Bay, ON P7B 5E1, Canada, and with the Department of Electrical and Computer Engineering, Western University, London, ON N6A 3K7, Canada (e-mail: atayebi@lakeheadu.ca)}
		
		\begin{abstract}                
			This paper investigates the problem of pose synchronization for multiple rigid body systems evolving on the matrix Lie group $\SE(3)$. We propose a distributed hybrid feedback control scheme with global asymptotic stability guarantees using relative pose and group velocity measurements. The key idea consists of constructing a new potential function on $\SE(3) \times \mathbb{R}$ with a generalized non-diagonal weighting matrix, and a set of auxiliary scalar variables with continuous-discrete hybrid dynamics. Based on the new potential function and the auxiliary scalar variables, a geometric distributed hybrid feedback designed directly on $\SE(3)$ is proposed to achieve global pose synchronization. Numerical simulation results are presented to illustrate the performance of the proposed distributed hybrid control scheme.
		\end{abstract}
		
		\begin{keyword}
			Pose synchronization, Multiple rigid body systems, Lie group $SE(3)$, Hybrid feedback
		\end{keyword}
		
	\end{frontmatter}
	
	\section{Introduction}
	The design of decentralized control algorithms for pose (position and orientation) synchronization of multiple rigid body systems (MRBSs) is crucial in many practical applications, such as robotics \citep{bullo2009distributed}, satellites \citep{subbarao2008nonlinear}, unmanned aerial vehicles (UAVs) \citep{rashad2020fully} and spacecraft \citep{lawton2002synchronized}. Since the orientation/attitude of a rigid body evolves on the Special Orthogonal group $\SO(3)$, the motion of a rigid body in three-dimensional space inherently evolves on the Special Euclidean group $\SE(3)$. There are several typical approaches that address the coordination of MRBSs in three-dimensional space. The typical one considers the decoupled control framework on $\SO(3) \times \mathbb{R}^3$, whose key idea lies in separating the rigid body dynamics into independent rotational dynamics and translational dynamics, and then designing independent feedback laws for the attitude control and the position control separately, see for instance, \citep{schlanbusch2011synchronization,nair2007stable}. These decoupled approaches fundamentally disregard the intrinsic geometric coupling between the rotational and translational dynamics inherent to the Lie group $\SE(3)$, leading to non-rigorous stability results.  Other approaches relying on the dual quaternions to parameterize $\SE(3)$ have been proposed in \citet{filipe2013simultaneous,wang2012dual} and \citet{malladi2020rigid}, which allows to represent the pose in a compact form without requiring decoupling. 
	
	However, the unit-quaternion is known as a non-unique attitude representation (double cover of the space of rotations) and may cause the so-called unwinding phenomenon in practice \citep{Mayhew2011}. Few works consider the simultaneous coordination of attitude and position dynamics on the matrix Lie group $\SE(3)$, such as \citep{thunberg2016consensus, hatanaka2011passivity, igarashi2008passivity}. The method proposed in \citet{thunberg2016consensus},  leverages the Axis-Angle representation of the rotation to design the control laws. In \citet{hatanaka2011passivity} and \citet{igarashi2008passivity}, a passivity-based pose synchronization approach on $\SE(3)$ has been proposed. However, the synchronization control schemes developed in \citet{thunberg2016consensus, hatanaka2011passivity} and \citet{igarashi2008passivity} fail to guarantee global asymptotic stability (GAS) due to the topological properties of the motion space containing $\SO(3)$.  In fact, almost global asymptotic stability (AGAS)\footnote{AGAS refers to the fact that the equilibrium point is asymptotically stable and attractive from all initial conditions except a set of zero Lebesgue measure.} is the strongest stability result that one can achieve, with continuous time-invariant feedback, in spaces involving $\SO(3)$. 
	
	To overcome the above-mentioned topological obstruction, a hybrid feedback framework \citep{goebel2009hybrid} on the matrix Lie group $\SO(3)$, relying on a ``synergistic'' family of potential functions, has been introduced in \citet{mayhew2011synergistic}, leading to GAS guarantees. This hybrid feedback framework has been leveraged to solve the attitude tracking and estimation problems on $\SO(3)$ with GAS guarantees, see for instance, \citep{mayhew2013synergistic,lee2015global,berkane2016construction,wang2023nonlinear}. 
	However, most of these hybrid approaches rely on the compactness of the group $\SO(3)$, which makes it difficult to extend them to the global pose control problem on the noncompact group $\SE(3)$. Recently, a new hybrid feedback scheme relying on an auxiliary scalar variable with hybrid dynamics and a suitable potential function on $\SO(3)\times \mathbb{R}$ ($\SE(3)\times \mathbb{R}$) has been proposed in \citet{wang2021hybrid}, which can achieve global attitude (pose) control on matrix Lie groups $\SO(3)$ ($\SE(3)$). Inspired by this work, the authors in \citet{boughellaba2024global} proposed a distributed attitude synchronization control scheme on $\SO(3)$ with GAS guarantees. However, the design of pose synchronization schemes directly on $\SE(3)$ with GAS guarantees remains an open problem. 
	
	In this work, we consider the problem of pose synchronization for MRBSs on $\SE(3)$ using relative pose information. We first provide a systematic procedure for the construction of a generic potential function on $\SE(3) \times \mathbb{R}$ that relaxes the requirement of a diagonal weighting matrix used in \citet{wang2021hybrid}. Then, an extended synergistic potential function on $(\SE(3)\times \mathbb{R})^N$ is introduced for the pose synchronization problem. With this synergistic potential function, a new distributed hybrid feedback for pose synchronization of MRBSs is designed directly on the group $\SE(3)$ with GAS guarantees.

	\section{Preliminaries}
	\subsection{Notations}
	The set of real, positive real, nonnegative real and the natural numbers are denoted by $\mathbb{R}, \mathbb{R}_{> 0},  \mathbb{R}_{\geq 0}$, and $\mathbb{N}$, respectively. 
	We denote by $\mathbb{R}^n$ the $n$-dimensional Euclidean space, and denote by $\mathbb{S}^n:= \{x \in \mathbb{R}^{n+1}|x^{\top} x=1\}$ the set of unit vectors in $\mathbb{R}^{n+1}$. For matrices $A, B \in \mathbb{R}^{m\times n}$, the Euclidean inner product is defined as $\langle\langle A, B \rangle\rangle=\tr(A^{\top} B)$. The Euclidean norm of a vector $x\in \mathbb{R}^n$ is defined as $\|x\|=\sqrt{x^{\top} x}$. The matrix $I_n\in \mathbb{R}^{n\times n}$ denotes the identity matrix. Denote the block diagonal matrix by $\blkdiag(\cdot)$. For a given matrix $A \in \mathbb{R}^{n\times n}$, we define the set of all unit-eigenvectors and the $i$-th pair of eigenvalue and eigenvector of $A$ by $\mathcal{E}_v(\cdot)$ and $(\lambda_i^A, v_i^A)$, respectively. Let $\mathbf{0}$ denote a vector or matrix with compatible dimensions. Let $\mathcal{D}$ be a smooth manifold embedded in $\mathbb{R}^n$ with $T_x\mathcal{D}$ being its tangent space at point $x \in \mathcal{D}$. Let $f: \mathcal{D} \rightarrow \mathbb{R}_{\geq 0}$ be a continuously differentiable real-valued function with respect to the set $\mathcal{A} \subset \mathcal{D}$ if $f(x) >0$ for all $x\notin \mathcal{A}$ and $f(x)=0$ for all $x \in \mathcal{A}$. Then, the function $f$ is a potential function on $\mathcal{D}$ with respect to the set $\mathcal{A}$. The gradient of $f$ at point $x\in \mathcal{D}$, denoted by $\nabla_xf(x) \in T_x \mathcal{D}$, is uniquely defined by $\dot{f}(x)=\langle \nabla_{x}f(x), \xi\rangle_x$ for all $\xi \in T_x\mathcal{D}$, where $\langle , \rangle_x: T_x\mathcal{D} \times T_x \mathcal{D} \rightarrow \mathbb{R}$ denotes the Riemannian metric on $\mathcal{D}$.  The point $x\in \mathcal{D}$ is called a critical point of $f$ if $\nabla_xf(x)=0$. 
	The cross product between two vectors $y, z\in \mathbb{R}^3$, can be written as the matrix multiplication $y \times z=y^{\times}z$ with 
	\begin{align*}
		y^{\times}=\begin{bmatrix}
			0&-y_3&y_2\\
			y_3&0&-y_1\\
			-y_2&y_1&0
		\end{bmatrix}.
	\end{align*}
	Given a matrix $A = [a_{ij}]_{1\leq i,j \leq 3}\in \mathbb{R}^{3\times 3}$, 
	define the map $\psi(A):= \frac{1}{2}[a_{32}-a_{23}, a_{13}-a_{31}, a_{21}-a_{12}]^{\top}$, such that $\langle \langle A,y^{\times} \rangle \rangle=2y^{\top} \psi(A)$ for all $A\in \mathbb{R}^{3\times 3}, y \in \mathbb{R}^3$.  Let the map $\mathcal{R}_a: \mathbb{R}\times \mathbb{S}^2 \rightarrow \SO(3)$ denote the well-known angle-axis parameterization of the attitude, given by $\mathcal{R}_a(\theta,u):= I_3+\sin(\theta)u^{\times}+(1-\cos(\theta))(u^{\times})^2$ with $\theta \in \mathbb{R}$ being the rotation angle and $u \in \mathbb{S}^2$ being the rotation axis.

	\subsection{Special Euclidean Group $\SE(3)$}
	The attitude $R$ of a rigid body is represented by a 3-by-3 rotation matrix that belongs to the three-dimensional Special Orthogonal group given by $ \SO(3):=\{R\in \mathbb{R}^{3\times 3}:R^{\top} R=R R^{\top}=I_3, \det(R)=1\}$. 
	The Lie algebra of $\SO(3)$ is denoted by $\so(3):=\{\Omega \in \mathbb{R}^{3 \times 3}:~ \Omega^{T}=-\Omega\}$.
	The pose $(R, p)$ of a rigid body is represented by a 4-by-4 matrix  $X= \mathcal{T}(R,p)$ that belongs to the Special Euclidean group:
	\begin{align*}
		\SE(3):=\Big\{X=\begin{bmatrix}
			R&p\\
			\mathbf{0}&1
		\end{bmatrix} \in \mathbb{R}^{4\times 4}: R\in \SO(3), p\in \mathbb{R}^3 \Big\}.
	\end{align*}
	The Lie algebra of $\SE(3)$ is denoted by
	\begin{align*}
		\se(3)=\Big\{U=\begin{bmatrix}
			\omega^{\times}&v\\
			\mathbf{0}&0
		\end{bmatrix} \in \mathbb{R}^{4\times 4}:\omega^{\times} \in \so(3), v \in \mathbb{R}^3\Big\}.
	\end{align*}
	We introduce the following map $(\cdot)^{\wedge}: \mathbb{R}^6 \rightarrow \se(3)$ as 
	\begin{align}\label{xi_def}
		\xi^{\wedge}=\begin{bmatrix}
			\omega^{\times}&v\\
			\mathbf{0}&0
		\end{bmatrix} \in \se(3)
	\end{align}
	with $\xi=[\omega^{\top},v^{\top}]^{\top}$ and $\omega, v\in \mathbb{R}^3$. The map  $\bar{\psi}:\mathbb{R}^{4\times 4} \rightarrow \mathbb{R}^6$ is defined as:
	\begin{align}\label{psiA_def}
		\bar{\psi}(\mathbb{A})=\begin{bmatrix}
			\psi(A)\\
			\frac{1}{2}b
		\end{bmatrix}, \quad \forall \mathbb{A}=\begin{bmatrix}
			A&b\\
			c^{\top}&d
		\end{bmatrix} \in \mathbb{R}^{4\times 4}
	\end{align}
	with $A\in \mathbb{R}^{3\times 3},b, c \in \mathbb{R}^3, d \in \mathbb{R}$. Then, one has the following identities:
	\begin{subequations}
		\begin{align}
			\langle\langle \mathbb{A}, x^{\wedge} \rangle\rangle&=2x^{\top}\bar{\psi}(\mathbb{A}) \label{identa} \\ 
			\bar{\psi}(X^{\top}(I_4-X)\mathbb{A})&=-\bar{\psi}((I_4-X^{-1})\mathbb{A})  \label{identb}
		\end{align}
	\end{subequations}
	for all $\mathbb{A} \in \mathbb{R}^{4\times 4}, x \in \mathbb{R}^6, X\in \SE(3)$. Define the adjoint map $\Ad: \SE(3) \rightarrow \mathbb{R}^{6\times 6}$ as 
	\begin{align}\label{Ad_def}
		\Ad_{X}=\begin{bmatrix}
			R&\mathbf{0}\\
			p^{\times}R&R
		\end{bmatrix} \in \mathbb{R}^{6\times 6}, \forall X=\begin{bmatrix}
			R&p\\
			\mathbf{0}&1
		\end{bmatrix}
	\end{align}
	and the adjoint operator $\ad: \mathbb{R}^6 \rightarrow \mathbb{R}^{6\times 6}$ as 
	\begin{align}\label{ad_def}
		\ad_{\xi}=\begin{bmatrix}
			\omega^{\times}& \mathbf{0}\\
			v^{\times}& \omega^{\times} 
		\end{bmatrix} \in \mathbb{R}^{6\times 6}, \quad \forall \xi= \begin{bmatrix}
			\omega\\v
		\end{bmatrix}. 
	\end{align}
	Then, one has the following useful identities:
	\begin{subequations}\label{props}
		\begin{align}
			\Ad_{X^{-1}}&=\Ad_{X}^{-1} \label{propa}\\
			Xx^{\wedge}X^{-1}&=(\Ad_{X}x)^{\wedge} \label{propb} \\
			\ad_x x&=\mathbf{0} \label{propc} 
		\end{align}
	\end{subequations}
	for all $X \in \SE(3), x\in \mathbb{R}^6$.

	\subsection{Graph Theory}
	Consider a network comprising $N$ rigid body systems, where the information flow between the individual systems is modeled by a graph $\mathcal{G}=(\mathcal{V}, \mathcal{E})$, with $\mathcal{V}=\{1,\dots, N\}$ and $\mathcal{E} \subseteq \mathcal{V} \times \mathcal{V}$ denoting the vertex and edge sets of graph $\mathcal{G}$, respectively. In an undirected graph, an edge $(i,j)\in \mathcal{E}$ indicates that the $i$-th and $j$-th rigid bodies interact with each other, which means information can be exchanged bidirectionally between them. When two rigid-body systems can access each other’s relative information, they are connected by an edge in the graph. For each edge between two rigid bodies, one arbitrarily assigns an index, a positive end, and a negative end. Let $\mathcal{M}=\{1,\dots, M\}$ denote the set of graph edges with $M$ denoting the total number of edges, and let $\mathcal{M}_i^+\subset\mathcal{M}$ and $\mathcal{M}_i^-\subset\mathcal{M}$ denote the set of edges for which node $i$ is the positive end and the negative end, respectively. Define $\mathcal{N}_i:=\{j\in \mathcal{V}: (i,j)\in \mathcal{E}\}$ as the set of connected neighbors of the $i$-th rigid body, and $\mathcal{M}_i :=\mathcal{M}_i^+ \cup \mathcal{M}_i^-$ denote the set of edges connected to the $i$-th rigid body. The classical incidence matrix $B: =[B_{ik}]_{\substack{1\leq i \leq N, 1\leq k \leq M}} \in \mathbb{R}^{N\times M}$ is given by
	\begin{align}\label{B_def}
		B_{ik}=\left\{
		\begin{array}{ll}
			+1 \quad  & k \in \mathcal{M}_i^+ \\
			-1 \quad  & k \in \mathcal{M}_i^-\\
			0 \quad & otherwise 
		\end{array} ,~~~ \forall i \in \mathcal{V},  k\in \mathcal{M}.
		\right.
	\end{align}
	\section{Problem Statement}
	Consider an MRBS with a set of $N$ rigid bodies. Let $X_i=\mathcal{T}(R_i,p_i)\in \SE(3)$ denote the pose of the $i$-th rigid body system, and $\xi_i=[\omega_i^{\top}, v_i^{\top}]^{\top} \in \mathbb{R}^6$ denote the group velocity of the $i$-th rigid body system  expressed in the body-fixed frame. For each rigid body system, the fully actuated dynamics on $\SE(3)\times\mathbb{R}^6$ are given by \citep{bullo1999tracking}
	\begin{align}\label{eqn:kinematic}
		\begin{cases}
			\dot{X}_i&=X_i\xi_i^{\wedge} \\
			\mathbb{I}_i\dot{\xi}_i&=\ad_{\xi_{i}}^{\top}\mathbb{I}_i\xi_i+u_{i}
		\end{cases}
	\end{align} 
	where $\mathbb{I}_i=\blkdiag(J_i,m_iI_3)  \in \mathbb{R}^{6\times 6}$ denotes the mass-inertia matrix of the $i$-th rigid body with $J_i=J_i^{\top}>0$ and $m_i>0$, and $u_{i}=[\tau_i^{\top}, f_i^{\top}]^{\top} \in \mathbb{R}^6$ with $\tau_i$ and $f_i$ denoting the torque and force inputs of the $i$-th rigid body, respectively.  An example of a fully actuated multirotor UAV can be found in  \cite{rashad2020fully}.
	The interaction topology among rigid bodies is described by the graph $\mathcal{G}$, which implies that the relative pose is available between two rigid bodies if they are connected, \ie, $(i,j)\in \mathcal{E}$.
	By assigning an arbitrary orientation to each edge $k\in  \mathcal{M}_i^+ \cap \mathcal{M}_j^- \subset \mathcal{M}$ connecting rigid bodies $(i,j)\in \mathcal{E}$, we define the relative pose between them as $\bar{X}_k=\mathcal{T}(\bar{R}_k,\bar{p}_k):=X_j^{-1}X_i$.
	
	Then, in view of \eqref{eqn:kinematic}, one has
	\begin{align}    \label{eqn:errdynamics}  
		\begin{cases}
			\dot{\bar{X}}_k&=\bar{X}_k\bar{\xi}_k^{\wedge} \\                   \mathbb{I}_i\dot{\xi}_i&=\ad_{\xi_{i}}^{\top}\mathbb{I}_i\xi_i+u_{i} 
		\end{cases}   
	\end{align} 
	where $\bar{\xi}_k:=\xi_i-\Ad_{\bar{X}_k}^{-1}\xi_j$. It is obvious that for every $i \in \mathcal{V}$ and $j \in \mathcal{N}_i$, the intersection between the sets $\mathcal{M}_i^+$ and $\mathcal{M}_j^-$ is either a set with a single element or an empty set otherwise. Let $\bar{\xi}=(\bar{\xi}_1^{\top}, \bar{\xi}_2^{\top},\dots, \bar{\xi}_M^{\top})^{\top}
	\in \mathbb{R}^{6M}$ and $\xi=(\xi_1^{\top}, \xi_2^{\top}, \dots, \xi_N^{\top})^{\top}
	\in \mathbb{R}^{6N}$. It can be verified that
	\begin{align}\label{eqn:bar_xi}
		\bar{\xi}=\bar{B}(t)^{\top}\xi
	\end{align}
	where the incidence matrix $\bar{B}(t):=[\bar{B}_{ik}]_{\substack{1\leq i \leq N, 1\leq k \leq M}} \in \mathbb{R}^{6N\times 6M}$ is defined as follows:
	\begin{align}\label{Bt_def}
		\bar{B}_{ik}=\left\{
		\begin{array}{ll} I_6&k\in \mathcal{M}_i^+\\
			-\Ad_{\bar{X}_k}^{-\top} & k \in \mathcal{M}_i^{-}\\
			\mathbf{0}& otherwise \end{array}\right. , \forall i \in \mathcal{V},  k\in \mathcal{M}.
	\end{align}
	
	Unlike the classical incidence matrix $B$ defined in \eqref{B_def}, the matrix $\bar{B}(t)$ defined in \eqref{Bt_def} is time-varying and depends on the relative pose $\bar{X}_k$ of each edge. One introduces the following important assumption for the graph $\mathcal{G}$:
	\begin{assum} \label{assum:1}
		The interaction graph $\mathcal{G}$ for system \eqref{eqn:kinematic} is assumed to be connected and acyclic.
	\end{assum}
	
	Note that Assumption \ref{assum:1} is commonly considered in the problem of attitude/pose synchronization \citep{wang2012dual,boughellaba2024global}. In the following lemma, we present a useful property of $\bar{B}(t)$ under Assumption \ref{assum:1}:      
	\begin{lem}\label{lem: HE} 
		Consider the matrix $\bar{B}(t)$ defined in \eqref{Bt_def} for system  under Assumption \ref{assum:1}. Then, $\bar{B}(t)x=\mathbf{0}$ implies $x=\mathbf{0}$ for all $t \geq 0$.
	\end{lem}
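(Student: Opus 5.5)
Since $\mathcal{G}$ is connected and acyclic by Assumption \ref{assum:1}, it is a tree on $N$ vertices, so $M=N-1$. I will show that $\bar{B}(t)$ has full column rank by peeling off leaves, mirroring the standard argument that the classical incidence matrix $B$ of a tree has trivial kernel. Fix $t\ge 0$ and write $x=(x_1^{\top},\dots,x_M^{\top})^{\top}$ with $x_k\in\mathbb{R}^6$. By \eqref{Bt_def}, the $i$-th block row of $\bar{B}(t)x=\mathbf{0}$ reads
\begin{align*}
\sum_{k\in\mathcal{M}_i^+} x_k-\sum_{k\in\mathcal{M}_i^-}\Ad_{\bar{X}_k}^{-\top}x_k=\mathbf{0},\qquad i\in\mathcal{V}.
\end{align*}

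The key steps are as follows. First, every tree with $N\ge 2$ has a leaf $i$, i.e.\ $|\mathcal{M}_i|=1$, with unique incident edge $k$. Its block row then reduces to either $x_k=\mathbf{0}$ (if $k\in\mathcal{M}_i^+$) or $\Ad_{\bar{X}_k}^{-\top}x_k=\mathbf{0}$ (if $k\in\mathcal{M}_i^-$). In the latter case, \eqref{propa} and \eqref{Ad_def} show that $\Ad_{\bar{X}_k}$ is invertible, with determinant $\det(\bar{R}_k)^2=1$ since it is block lower-triangular. Hence $\Ad_{\bar{X}_k}^{-\top}$ is nonsingular and again $x_k=\mathbf{0}$. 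Second, I remove node $i$ and edge $k$. The remaining block rows of $\bar{B}(t)x=\mathbf{0}$ are exactly the equations associated with the subtree on $\mathcal{V}\setminus\{i\}$, because $x_k=\mathbf{0}$ drops out of the equation of the other endpoint of $k$. That subtree is still connected and acyclic, and its matrix has the same structure \eqref{Bt_def}.

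Third, I conclude by induction on $N$. The base case $N=1$ has $M=0$, so the claim holds vacuously; the case $N=2$ is the single-leaf computation above. Iterating the leaf removal $N-1$ times gives $x_k=\mathbf{0}$ for all $k\in\mathcal{M}$. Since $t$ was arbitrary and nothing used beyond invertibility of each $\Ad_{\bar{X}_k(t)}$, the result holds for all $t\ge 0$.

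The only real subtlety is the leaf step when the leaf is the negative end, which requires nonsingularity of $\Ad_{\bar{X}_k}^{-\top}$ uniformly in $t$. This follows directly from the explicit form \eqref{Ad_def} and \eqref{propa}, so I do not expect a genuine obstacle.
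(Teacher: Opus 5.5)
Your proof is correct, but it follows a different route from the paper's.

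\textbf{The paper's route.} It works with the transpose. It takes $y\in\nulls(\bar B(t)^{\top})$ and reads off $y_j=\Ad_{\bar X_k}y_i$ on each edge. Connectivity then expresses every $y_j$ in terms of $y_1$, so $\dim\nulls(\bar B(t)^{\top})=6$. Rank--nullity together with $M=N-1$ (acyclicity) gives $\rank\bar B(t)=6N-6=6M$, i.e., full column rank.

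\textbf{Your route.} You work directly on $\bar B(t)x=\mathbf{0}$, and your argument amounts to back-substitution. Ordering the edges by successive leaf removal puts $\bar B(t)$ in block-triangular form, with invertible diagonal blocks $I_6$ or $-\Ad_{\bar X_k}^{-\top}$.

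\textbf{What yours buys.} It is elementary and constructive, and it avoids the dimension count entirely. It also shows that acyclicity alone is what matters: every forest with at least one edge has a leaf, so injectivity of $\bar B(t)$ holds without connectivity.

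\textbf{What the paper's buys.} It gives an explicit description of the left null space of $\bar B(t)$ as the $6$-dimensional family of $\Ad$-consistent vectors $y_j=\Ad_{\bar X_k}y_i$. This is the $\SE(3)$ analogue of $\nulls(B^{\top})=\mathrm{span}(\mathbf{1})$ for the classical incidence matrix. It also separates the two roles cleanly: connectivity bounds the left kernel, and acyclicity fixes the column count via $M=N-1$.
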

	\begin{pf}
		Inspired by \cite{boughellaba2024global}, we will prove this lemma by contradiction.  Note that the graph is connected and acyclic, which implies that the graph contains a tree and the number of edges is $M=N-1$.  
		Then, one has rank$(\bar{B}(t)) \leq 6N-6$ due to $\bar{B}(t) \in \mathbb{R}^{6N \times 6M}$. Moreover, let $y=[y_1^{\top},\dots,y_N^{\top}]^{\top} \in \mathbb{R}^{6N}$ be a vector that belongs to the null space of $\bar{B}^{\top}(t), \ie,  \bar{B}^{\top}(t)y=\mathbf{0}$. According to \eqref{Bt_def}, one has $y_j=Ad_{\bar{X}_k}y_i$ owing to the invertibility of $Ad_{\bar{X}_k}$, for all $(i,j)\in \mathcal{E}$ and $k\in \mathcal{M}$. Due to the connectivity of the graph $\mathcal{G}$, without loss of generality, there exists a map $\mathcal{Q}_j: (\mathbb{R}^{6\times 6})^{M} \rightarrow \mathbb{R}^{6 \times 6}$ such that $y_j=\mathcal{Q}_j (Ad_{\bar{X}_1}, \dots, Ad_{\bar{X}_M})y_1$, for each $j\in \mathcal{V}$. Define $\mathcal{Q}: =[\mathcal{Q}_1^{\top}, \dots, \mathcal{Q}_N^{\top}]^{\top} \in \mathbb{R}^{6N \times 6}$, one can show that $y=\mathcal{Q}(Ad_{\bar{X}_1}, \dots, Ad_{\bar{X}_M})y_1$. 
		The injectivity of $\mathcal{Q}$, which follows from the invertibility of each $\mathcal{Q}_j$, implies that the solution space of $\bar{B}(t)^{\top} y = \mathbf{0}$ is isomorphic to $y_1 \in \mathbb{R}^6$.  Therefore, it is clear $\dim \nulls (\bar{B}^{\top}(t))=6$. Together with rank-nullity theory, one can obtain $\rank(\bar{B}(t))=\rank(\bar{B}(t)^{\top})=6N-6$, which means that the matrix $\bar{B}(t)$ is full of column rank. It follows that $x=\mathbf{0}$ is the only solution to $\bar{B}(t)x=\mathbf{0}$.  This completes the proof.
	\end{pf}

	The objective of this work is to design a distributed control scheme for the MRBSs described by \eqref{eqn:kinematic} under Assumption \ref{assum:1} such that the compact set $\mathcal{O}:=\{\bar{X}_k = I_4, \xi_i=\mathbf{0}, \forall k\in \mathcal{M}, i\in \mathcal{V}\}$ is globally asymptotically stable.

	\section{Potential function design on $(\SE(3) \times \mathbb{R})^M$} 
	The construction of the potential function is critical for achieving pose synchronization of MRBSs. In this section, a systematic procedure for constructing a general potential function on $(\SE(3) \times \mathbb{R})^M$ is provided.
	\subsection{Potential Function on $\SE(3)$ and $\SE(3)\times \mathbb{R}$}\label{IVA}
	Consider the following real-valued function $V:\SE(3) \to \mathbb{R}$:
	\begin{align}\label{V_PF}
		V(\bar{X}_k):=\frac{1}{2}\tr((I_4-\bar{X}_k)\mathbb{A}(I_4-\bar{X}_k)^{\top})
	\end{align}
	where the weighting matrix $\mathbb{A}=\mathbb{A}\T$ is given by
	\begin{align}\label{AA_def}
		\mathbb{A}:=\begin{bmatrix}
			A&b\\
			b^{\top}&d
		\end{bmatrix} \in \mathbb{R}^{4\times4}
	\end{align}
	with $A \in \mathbb{R}^{3\times 3}, b\in \mathbb{R}^3, d\in \mathbb{R}_{>0}$, $W:= A-bb^{\top}d^{-1}$ being positive semi-definite, and $\bar{W}:=\frac{1}{2}(\tr(W)I_3-W)$ being positive definite. It follows from the definition of $\bar{W}$ that $\mathcal{E}_v(\bar{W}) \equiv \mathcal{E}_v(W)$. Then, one verifies that $V$ is a potential function on $\SE(3)$ with respect to $I_4$, \ie, $V(\bar{X}_k)\geq 0$ for all $\bar{X}_k\in \SE(3)$ and $V(\bar{X}_k)=0$ if and only if $\bar{X}_k = I_4$. 
	Applying the following matrix decomposition:
	\begin{align}
		\begin{bmatrix}
			A&b\\
			b^{\top}&d
		\end{bmatrix}=\begin{bmatrix}
			I_3&bd^{-1} \nonumber \\
			\mathbf{0}&1
		\end{bmatrix}
		\begin{bmatrix}
			W&0\\
			\mathbf{0}&d
		\end{bmatrix}
		\begin{bmatrix}
			I_3&\mathbf{0}\\
			b^{\top}d^{-1}&1
		\end{bmatrix}
	\end{align}
	one can easily rewrite the potential function $V$ in \eqref{V_PF} as
	\begin{align}\label{V_PF_explict}
		V(\bar{X}_k)
		&=\frac{1}{2}\tr((I_3-\bar{R}_k)W(I_3-\bar{R}_k)^{\top}) \nonumber \\
		&~~~~+\frac{d}{2}\|\bar{p}_k-(I_3-\bar{R}_k)bd^{-1}\|^2.
	\end{align} 
	The first term of $V$ in \eqref{V_PF_explict} is the well-known potential function on $\SO(3)$ based on a modified trace function, while the second term of $V$ in \eqref{V_PF_explict} contains both relative attitude and position when $b\neq \mathbf{0}$. 
	Let $\nabla_{\bar{X}_k}V$ denote the gradient of $V$ at point $\bar{X}_k$.
	From \citep{wang2018hybrid}, the set of all the critical points of $V(\bar{X}_k)$ is given by 
	\begin{align}\label{def:Upsilon_V}
		\Upsilon_V :=&\{\bar{X}_k \in \SE(3): \nabla_{\bar{X}_k}V(\bar{X}_k)=0\} \nonumber \\
		=& \{I_4\} \cup \{\bar{X}_k=\mathcal{T}(\bar{R}_k, \bar{p}_k): \bar{R}_k=\mathcal{R}_a(\pi, v),  \nonumber \\
		&\qquad  \quad \bar{p}_k=(I_3-R_a(\pi, v))bd^{-1}, v\in \mathcal{E}_v(W)\}
	\end{align} 
	and the set of all the undesired critical points is denoted by $\Upsilon_V \setminus \{I_4\}$. 
	As shown in \cite{koditschek1989application}, no smooth potential function with a global attractor exists on Lie groups due to topological obstructions. Therefore, gradient-based controllers, relying on such a potential function, can at most achieve AGAS.

	In order to achieve GAS, a modified potential function on $\SE(3) \times \mathbb{R}$ is proposed in \cite{wang2021hybrid}. Consider the transformation map $\Gamma_{\mathbb{A}}: \SE(3) \times \mathbb{R} \rightarrow \SE(3)$ defined as 
	\begin{align}\label{Gamma_def}
		\Gamma_{\mathbb{A}}(\bar{X}_k, \theta_k)
		:=\bar{X}_k \exp(\theta_k u_c^{\wedge})
	\end{align}
	where a constant vector $u_c\in \mathbb{R}^6$ and a real-valued scalar $\theta_k \in \mathbb{R}$ with hybrid dynamics. Then, a modified potential function $U: \SE(3) \times \mathbb{R}  \rightarrow \mathbb{R}_{\geq 0}$ with respect to $\mathcal{A}_0:=(I_4,0)$ is given as:
	\begin{align}\label{U_def}  
		U(\bar{X}_k, \theta_k):=\frac{1}{2}\tr((I_4-\Gamma_{\mathbb{A}})\mathbb{A} (I_4-\Gamma_{\mathbb{A}})^{\top})+\frac{\gamma}{2}\theta_k^2
	\end{align}
	with some $\gamma>0$. It is easy to verify from \eqref{U_def} that $U(\bar{X}_k,\theta_k)> 0$ for all $(\bar{X}_k, \theta_k) \in \SE(3) \times \mathbb{R} $, and $U(\bar{X}_k,\theta_k)= 0$ if and only if $(\bar{X}_k, \theta_k)=\mathcal{A}_0$. Therefore, $U(\bar{X}_k,\theta_k)$ in \eqref{U_def} is a potential function on $\SE(3) \times \mathbb{R}$ with respect to $\mathcal{A}_0$. In \cite{wang2021hybrid}, by considering a block diagonal matrix $\mathbb{A}= \blkdiag(A,d)$ and $u_c = [u_{c1}\T, \mathbf{0}\T]\T$ with $u_{c1} \in \mathbb{S}^2$, the potential function in \eqref{U_def} can be simplified as:
	\begin{align*}
		U(\bar{X}_k, \theta_k)=\tr(W(I_3-\bar{R}_k\mathcal{R}_a(\theta_k, u_{c1})))+\frac{\gamma}{2}\theta_k^2+\frac{d}{2}\|\bar{p}_k\|^2.
	\end{align*}
	It is clear that the third term in $U(\bar{X}_k, \theta_k)$ is independent of the rotation. This can simplify the hybrid feedback control design, resulting in a strict decrease in the potential function over each jump as shown in \cite{wang2021hybrid}. 
	However, the choice of a block diagonal matrix $\mathbb{A}$ is very restrictive in practical applications. 
	For instance, in some applications such as in \cite{wang2018hybrid}, the matrix $\mathbb{A}$ is directly constructed from landmark positions $p_i \in \mathbb{R}^3$, where $\mathbb{A}:= \sum_{i=1}^{n}k_ir_ir_i^{\top}$ with $r_i=\begin{bmatrix}
		p_i^{\top}&1
	\end{bmatrix}^{\top} \in \mathbb{R}^4$ and $k_i > 0$. In this situation, it is difficult to obtain the matrix $\mathbb{A}$ in diagonal form, due to the  physical location of the landmarks.

	\subsection{Generic Potential Function on $(\SE(3)\times \mathbb{R})^M$}
	To overcome the above-mentioned limitation, we construct a generic potential function on $\SE(3) \times \mathbb{R}$ for a general matrix $\mathbb{A}$ defined in \eqref{AA_def}. Consider $\theta_k \in \mathbb{R}$ and $u_c=[u_{c1}^{\top}, u_{c2}^{\top}]^{\top} \in \mathbb{R}^6$ with $u_{c1} \in \mathbb{S}^2$ and $u_{c2} \in \mathbb{R}^3$. Then, one can show that 
	\begin{align}\label{eqn:exp_thetak}
		\exp(\theta_k u_c^{\wedge}) &=\exp\begin{pmatrix}
			\theta_k u_{c1}^{\times}&  ~~~  \theta_k u_{c2} \nonumber \\
			\mathbf{0}&0
		\end{pmatrix}\\
		&= \begin{pmatrix}
			\exp(\theta_k u_{c1}^\times) & ~~~ \mathcal{U}(\theta_k u_{c1}^{\times}) \theta_k u_{c2}\\
			\mathbf{0}&1
		\end{pmatrix}
	\end{align}
	where the map $\mathcal{U}$ is defined as
	$ 
	\mathcal{U}(\theta_k u_{c1}^{\times}):=
	I_3+(\frac{1-\cos (\theta_k)}{\theta_k})u_{c1}^{\times}+(\frac{\theta_k-\sin (\theta_k)}{\theta_k})(u_{c1}^{\times})^2
	$. Then, using the facts $\mathcal{R}_a(\theta_k,u_{c1}) = I_3+\sin (\theta_k) u_{c1}^{\times}+(1-\cos(\theta_k))(u_{c1}^{\times})^2$ and $(u_{c1}^{\times})^3 = -\|u_{c1}\|^2 u_{c1}^{\times} = -u_{c1}^{\times}$, one can verify that
	\begin{align}\label{mathcalU}
		\mathcal{U}(\theta_k u_{c1}^{\times})\theta_k u_{c1}^\times  = \mathcal{R}_a(\theta_k ,u_{c1})-I_3. 
	\end{align}
	For each $\mathbb{A}$ defined in \eqref{AA_def}, we pick constant $u_c:=[u_{c1}^{\top},u_{c2}^{\top}]^{\top} \in \mathbb{R}^6$ for the design of the transformation map $\Gamma_{\mathbb{A}}$ in \eqref{Gamma_def} with $u_{c1} \in \mathbb{S}^2$ and $u_{c2}=-u_{c1}^{\times} bd^{{-1}}$, where  $d\in\mathbb{R}_{>0}, b\in \mathbb{R}^3$ are given by \eqref{AA_def} and the unit vector $u_{c1}$ will be given later.
	Then, from \eqref{eqn:exp_thetak} and \eqref{mathcalU} one has 
	\begin{align*}
		\mathcal{T}_{u_c}(\theta_k)&:=\exp(\theta_k u_c^{\wedge}) \\ 
		&=\begin{pmatrix} 
			\mathcal{R}_a(\theta_k,u_{c1})& ~~~ (I_3-\mathcal{R}_a(\theta_k,u_{c1}))bd^{-1}\\
			\mathbf{0}&1
		\end{pmatrix}.
	\end{align*} 
	Hence, from \eqref{Gamma_def} and \eqref{U_def} one has the following explicit potential function $U(\bar{X}_k, \theta_k)$:
	\begin{align}\label{U_explicit}
		U(\bar{X}_k, \theta_k)
		&=\tr(W(I_3-\bar{R}_k\mathcal{R}_a(\theta_k, u_{c1})))+\frac{\gamma}{2}\theta_k^2 \nonumber\\
		&~~~~+\frac{d}{2}\|\bar{p}_k - (I_3 - \bar{R}_k)bd^{-1}\|^2. 
	\end{align}
	It follows from \eqref{U_explicit} that a change in $\theta_k$ will only affect the first two items $\tr(W(I_3-\bar{R}_k\mathcal{R}_a(\theta_k, u_{c1})))+\frac{\gamma}{2}\theta_k^2$, while leaving the third item $\frac{d}{2}\|\bar{p}_k - (I_3 - \bar{R}_k)bd^{-1}\|^2$ unchanged. This nice feature will simplify the design of our hybrid feedback in the next section to avoid all the undesired equilibrium points through jumps and achieve GAS for the hybrid closed-loop system. Note that a similar result was achieved in \cite{wang2021hybrid} only for a block diagonal matrix $\mathbb{A} = \blkdiag(A,d)$ and $u_{c2} = \mathbf{0}$, which is a special case (\ie, $b= \mathbf{0}$ and $u_{c2} = -u_{c1}^\times bd^{-1} = \mathbf{0}$) of the potential function considered in this work.
	
	Define the set of all the critical points of $U$ in \eqref{U_def} as 
	\begin{align}\label{def_Upsilon_U}
		\Upsilon_U:=\{&(\bar{X}_k, \theta_k) \in \SE(3) \times \mathbb{R}: \nonumber \\
		&\nabla_{\bar{X}_k}U(\bar{X}_k, \theta_k) = \mathbf{0}, \nabla_{\theta_k}U(\bar{X}_k, \theta_k)=0\}.   
	\end{align}
	The following lemma  provides some useful properties of the map $\Gamma_{\mathbb{A}}$ in \eqref{Gamma_def} and the potential function $U$ in \eqref{U_def}:
	\begin{lem}{\cite[Proposition 3]{wang2021hybrid}}\label{lem:Uproperty} 
		Consider the transformation map $\Gamma_{\mathbb{A}}$ defined in \eqref{Gamma_def}, and the trajectories generated by $\dot{\bar{X}}_k=\bar{X}_k\bar{\xi}_k^{\wedge}$ and $\dot{\theta}_k=\nu_k$ with $(\bar{X}_k(0),\theta_k(0) )\in \SE(3)\times \mathbb{R}, (\bar{\xi}_k, \nu_k) \in  \mathbb{R}^6 \times \mathbb{R}$. Then, the following statements hold:
		\begin{subequations}\label{sts}
			\begin{align}
				\dot{\Gamma}_{\mathbb{A}}&=\Gamma_{\mathbb{A}}(\Ad^{-1}_{\mathcal{T}_{u_c}(\theta_k)}\bar{\xi}_k+\nu_k u_c)^{\wedge}\label{st_a} \\
				\bar{\psi}_{\nabla}(\bar{X}_k, \theta_k)&=\Ad^{-\top}_{\mathcal{T}_{u_c}(\theta_k)}\bar{\psi}((I_4-\Gamma_{\mathbb{A}}^{-1})\mathbb{A}) \label{st_b}\\
				\nabla_{\theta_k}U(\bar{X}_k, \theta_k)&=\gamma \theta_k+2u_c^{\top}\bar{\psi}((I_4-\Gamma_{\mathbb{A}}^{-1})\mathbb{A}) \label{st_c}\\
				\mathcal{A}_0 &\in \Upsilon_U: = \Upsilon_V \times \{0\} \label{U_cri}
			\end{align}
		\end{subequations} 
		where $\bar{\psi}_{\nabla}(\bar{X}_k, \theta_k):=\bar{\psi}(\bar{X}_k^{-1}\nabla_{\bar{X}_k}U(\bar{X}_k, \theta_k))$ for the sake of simplicity.
	\end{lem}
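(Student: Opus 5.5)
The plan is to verify the four items one at a time, using only the structure $\Gamma_{\mathbb{A}}=\bar{X}_k\mathcal{T}_{u_c}(\theta_k)$, the identities \eqref{identa}, \eqref{identb}, \eqref{props}, and the characterization \eqref{def:Upsilon_V} of $\Upsilon_V$.

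\textbf{Item \eqref{st_a}.} Differentiate $\Gamma_{\mathbb{A}}=\bar{X}_k\mathcal{T}_{u_c}(\theta_k)$ with the product rule. Since $u_c$ is constant, $\frac{d}{dt}\exp(\theta_k u_c^{\wedge})=\mathcal{T}_{u_c}(\theta_k)\,\nu_k u_c^{\wedge}$, which gives
\begin{align*}
\dot{\Gamma}_{\mathbb{A}}=\bar{X}_k\bar{\xi}_k^{\wedge}\mathcal{T}_{u_c}+\bar{X}_k\mathcal{T}_{u_c}\nu_k u_c^{\wedge}=\Gamma_{\mathbb{A}}\big(\mathcal{T}_{u_c}^{-1}\bar{\xi}_k^{\wedge}\mathcal{T}_{u_c}+\nu_k u_c^{\wedge}\big).
\end{align*}
Applying \eqref{propb} with $X=\mathcal{T}_{u_c}^{-1}$, together with \eqref{propa}, turns the conjugation into $(\Ad^{-1}_{\mathcal{T}_{u_c}}\bar{\xi}_k)^{\wedge}$. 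This proves \eqref{st_a}.

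\textbf{Items \eqref{st_b} and \eqref{st_c}.} Differentiate $U$ along the trajectories. With $Y:=\Gamma_{\mathbb{A}}$, the trace term has derivative $-\tr(\dot{Y}\mathbb{A}(I_4-Y)^{\top})=-\langle\langle Y^{\top}(I_4-Y)\mathbb{A},\,\eta^{\wedge}\rangle\rangle$, where $\eta:=\Ad^{-1}_{\mathcal{T}_{u_c}}\bar{\xi}_k+\nu_k u_c$. By \eqref{identa} this equals $-2\eta^{\top}\bar{\psi}(Y^{\top}(I_4-Y)\mathbb{A})$, and by \eqref{identb} it equals $2\eta^{\top}\bar{\psi}((I_4-Y^{-1})\mathbb{A})$. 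Adding $\gamma\theta_k\nu_k$ from the quadratic term gives
\begin{align*}
\dot U=2\bar{\xi}_k^{\top}\Ad^{-\top}_{\mathcal{T}_{u_c}}\bar{\psi}((I_4-\Gamma_{\mathbb{A}}^{-1})\mathbb{A})+\nu_k\big(\gamma\theta_k+2u_c^{\top}\bar{\psi}((I_4-\Gamma_{\mathbb{A}}^{-1})\mathbb{A})\big).
\end{align*}
Now match this with the gradient definition. For the left-invariant metric, the $\bar{X}_k$ contribution is $\langle\langle\nabla_{\bar{X}_k}U,\bar{X}_k\bar{\xi}_k^{\wedge}\rangle\rangle$, which by \eqref{identa} equals $2\bar{\xi}_k^{\top}\bar{\psi}(\bar{X}_k^{-1}\nabla_{\bar{X}_k}U)$. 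Since $\bar{\xi}_k$ and $\nu_k$ are arbitrary, identifying coefficients yields \eqref{st_b} and \eqref{st_c}. The one point to fix is the metric normalization, so that the factor $2$ is consistent with the paper's definition of $\bar{\psi}_{\nabla}$.

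\textbf{Item \eqref{U_cri}.} This is the main obstacle. A critical point requires $\bar{\psi}((I_4-\Gamma_{\mathbb{A}}^{-1})\mathbb{A})=0$, because $\Ad$ is invertible. Then \eqref{st_c} forces $\gamma\theta_k=0$, so $\theta_k=0$ and $\Gamma_{\mathbb{A}}=\bar{X}_k$. It remains to show that the condition $\bar{\psi}((I_4-\bar{X}_k^{-1})\mathbb{A})=0$ is exactly the critical-point condition of $V$. Setting $\theta_k\equiv0$, the same computation gives $\bar{\psi}(\bar{X}_k^{-1}\nabla V)=\bar{\psi}((I_4-\bar{X}_k^{-1})\mathbb{A})$, so the critical set is $\Upsilon_V\times\{0\}$, with $\Upsilon_V$ given in \eqref{def:Upsilon_V}. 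Finally, $\mathcal{A}_0$ lies in this set because $I_4\in\Upsilon_V$. Since $\mathcal{T}_{u_c}(0)=I_4$, all these reductions are exact, and no condition on $u_{c1}$ is needed for this item.
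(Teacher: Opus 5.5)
Your proposal is correct. It follows essentially the same direct-computation route as the cited \citet[Proposition 3]{wang2021hybrid}; the paper gives no separate proof. The steps are the product rule with \eqref{propa}--\eqref{propb} for \eqref{st_a}, differentiating $U$ and identifying gradients via \eqref{identa}--\eqref{identb} for \eqref{st_b}--\eqref{st_c}, and invertibility of $\Ad$ with $\gamma>0$ forcing $\theta_k=0$ (hence $\Gamma_{\mathbb{A}}=\bar{X}_k$) for \eqref{U_cri}.

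The one thing to tidy is the pairing. It should be the left-invariant metric used in the proof of Theorem~\ref{T1}, $\langle\nabla_{\bar{X}_k}U,\bar{X}_k\bar{\xi}_k^{\wedge}\rangle_{\bar{X}_k}=\langle\langle\bar{X}_k^{-1}\nabla_{\bar{X}_k}U,\bar{\xi}_k^{\wedge}\rangle\rangle$, not the Euclidean $\langle\langle\nabla_{\bar{X}_k}U,\bar{X}_k\bar{\xi}_k^{\wedge}\rangle\rangle$ you wrote, which would put $\bar{X}_k^{\top}$ rather than $\bar{X}_k^{-1}$ inside $\bar{\psi}$. With the left-invariant metric the factor $2$ appears on both sides and cancels, so the normalization concern you flagged does not arise.
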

	
	\par Next, we introduce the following lemma, adapted from \citet[Proposition 2]{wang2021hybrid}, that will be useful for our hybrid feedback control design.
	
	\begin{lem}   \label{lem:parameters}
		Consider the potential function $U$ on $SE(3)\times \mathbb{R}$ in \eqref{U_def} with respect to $\mathcal{A}_0$. Given $\mathbb{A}$ in \eqref{AA_def}, the set $\mathcal{P}_\mathbb{A}:= \{\Theta, u_{c1}, u_{c2}, \gamma, \delta_{\bar{X}}\}$ is given by 
		\begin{align} \label{eqn:P_A}
			\mathcal{P}_\mathbb{A}: \left\{
			\begin{aligned}
				\Theta &= \{|\theta_{\jmath}| \in (0,\pi], \jmath=1,\dots,m\}  \\
				u_{c1} &= \alpha_1 v_1^{W}+\alpha_2 v_2^{W}+\alpha_3 v_3^{W} \\
				u_{c2} & =-u_{c1}^{\times}bd^{-1}\\
				\gamma &< \textstyle \frac{4\Delta_W^*}{\pi^2}\\
				\delta_{\bar{X}}& \textstyle <\big(\frac{4\Delta_W^*}{\pi^2}-\gamma\big)\frac{\theta_{M}^2}{2},~ \theta_{M}:= \sup\nolimits_{\theta_k'\in \Theta} |\theta_k'|
			\end{aligned}
			\right.
		\end{align}
		where the constant scalars $0\leq \alpha_1, \alpha_2,\alpha_3\leq 1$ and $\Delta^*_W =\min_{v \in \mathcal{E}_v(W)} \Delta(v,u_{c1}) =\min_{v \in \mathcal{E}_v(W)} u_{c1}^{\top}(\tr(W)I_3-W-2v^{\top}Wv(I_3-vv^{\top}))u_{c1}>0$ are chosen as per one of the following three cases:
		\begin{itemize}
			\item [1)] if $\lambda_1^W = \lambda_2^W$, $\alpha_3^2 = 1 - \tfrac{\lambda_2^W}{\lambda_3^W}, \alpha_i^2=\tfrac{\lambda_2^W}{2\lambda_3^W}, i\in\{1,2\}$, and $\Delta_W^* = \lambda_1^W\bigl(1 - \tfrac{\lambda_2^W}{\lambda_3^W}\bigr)$;
			\item [2)] if $\lambda_2^W \geq \tfrac{\lambda_1^W \lambda_3^W}{\lambda_3^W - \lambda_1^W}$, $\alpha_1^2=0, \alpha_i^2 = \tfrac{\lambda_i^W}{\lambda_2^W + \lambda_3^W}, i\in\{2,3\}$, and $\Delta_W^* = \lambda_1^W$;
			\item [3)] if $\lambda_1^W < \lambda_2^W < \tfrac{\lambda_1^W \lambda_3^W}{\lambda_3^W - \lambda_1^W}$, $\alpha_i^2 = 1 - \tfrac{4 \prod_{j\neq i} \lambda_j^W}{\sum_{\imath=1}^3 \sum_{k\neq \imath} \lambda_{\imath}^W \lambda_k^W}$, $i\in\{1,2,3\}~ \text{and} ~ \Delta_W^* = \tfrac{4 \prod_{j} \lambda_j^W}{\sum_{\imath=1}^3 \sum_{k\neq \imath} \lambda_{\imath}^W \lambda_k^W}$.
		\end{itemize} 
		with $(\lambda_i^W,v_i^W)$ denoting the $i$-th pair of eigenvalue and eigenvector of $W$. Then,  for every $k \in \mathcal{M}$, one has
		\begin{equation}\label{mu_U_def}
			\mu_{U}(\bar{X}_k, \theta_k):=U(\bar{X}_k, \theta_k)-\min_{\theta_k' \in \Theta}U(\bar{X}_k, \theta'_k)>\delta_{\bar{X}}
		\end{equation}
		for all $(\bar{X}_k, \theta_k) \in \Upsilon_U \setminus 
		\{\mathcal{A}_0\}$ with some constant $\delta_{\bar{X}}>0$ given by \eqref{eqn:P_A}.
	\end{lem}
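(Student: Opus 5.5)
The plan is to reduce the statement to the diagonal-weighting case already handled in \citet[Proposition 2]{wang2021hybrid}. The engine is the explicit form \eqref{U_explicit}: a change of $\theta_k$ only moves the attitude part $\tr(W(I_3-\bar{R}_k\mathcal{R}_a(\theta_k,u_{c1})))+\frac{\gamma}{2}\theta_k^2$, while the translational term $\frac{d}{2}\|\bar{p}_k-(I_3-\bar{R}_k)bd^{-1}\|^2$ is independent of $\theta_k$. Hence
\begin{align*}
\mu_U(\bar{X}_k,\theta_k)=\mu_{U_R}(\bar{R}_k,\theta_k),
\end{align*}
where
\begin{align*}
U_R(\bar{R}_k,\theta_k):=\tr(W(I_3-\bar{R}_k\mathcal{R}_a(\theta_k,u_{c1})))+\tfrac{\gamma}{2}\theta_k^2
\end{align*}
is exactly the $\SO(3)\times\mathbb{R}$ potential studied in \citet{wang2021hybrid} with weighting matrix $W$. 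The first step is therefore to record this cancellation explicitly: the translational term drops out of the difference in \eqref{mu_U_def}.

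The second step is to describe the projection of $\Upsilon_U\setminus\{\mathcal{A}_0\}$ onto the rotational variables. By \eqref{U_cri} and \eqref{def:Upsilon_V}, every element of $\Upsilon_U\setminus\{\mathcal{A}_0\}$ has $\theta_k=0$, $\bar{R}_k=\mathcal{R}_a(\pi,v)$ with $v\in\mathcal{E}_v(W)$, and $\bar{p}_k=(I_3-\bar{R}_k)bd^{-1}$. The last condition makes the translational term vanish, although this is not even needed after the first step. We are then left to show that
\begin{align*}
U_R(\mathcal{R}_a(\pi,v),0)-\min_{\theta'\in\Theta}U_R(\mathcal{R}_a(\pi,v),\theta')>\delta_{\bar{X}}
\end{align*}
for every unit eigenvector $v$ of $W$.

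The third step is the computation underlying Proposition 2 of \citet{wang2021hybrid}. Set $R=\mathcal{R}_a(\pi,v)=2vv\T-I_3$ and expand $\mathcal{R}_a(\theta',u_{c1})$. This gives
\begin{align*}
U_R(R,0)-U_R(R,\theta')=\tfrac{1-\cos\theta'}{2}\Delta(v,u_{c1})-\tfrac{\gamma}{2}\theta'^2,
\end{align*}
where the $\sin\theta'$ term vanishes because $\tr(W R u_{c1}^\times)=0$ ($WR$ is symmetric since $v$ is an eigenvector of $W$). Now use $1-\cos\theta'\ge \frac{2\theta'^2}{\pi^2}$ for $|\theta'|\le\pi$ and $\Delta(v,u_{c1})\ge\Delta_W^*$. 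Choosing $\theta'$ with $|\theta'|=\theta_M$ then gives a gap of at least $(\frac{4\Delta_W^*}{\pi^2}-\gamma)\frac{\theta_M^2}{2}>\delta_{\bar{X}}$, which matches the choice in \eqref{eqn:P_A}.

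The main obstacle is verifying $\Delta_W^*>0$ with the stated closed form in each of the three eigenvalue cases for $u_{c1}$. This is an optimization of $\min_v\Delta(v,u_{c1})$ over the three eigen-directions; when eigenvalues coincide, the minimum must be taken over eigenspaces. I would import it directly from \citet[Proposition 2]{wang2021hybrid}, since $\Delta$ depends only on $W$ and $u_{c1}$ and not on $b$. The only new point to check is that $W=A-bb\T d^{-1}$, rather than $A$, plays the role of the attitude weighting.
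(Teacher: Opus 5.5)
Your route is the same as the paper's Appendix~A. You restrict to $\Upsilon_U\setminus\{\mathcal{A}_0\}$ via \eqref{U_cri} and \eqref{def:Upsilon_V}, and note that the translational term does not depend on $\theta$. You then expand $\tr(W(I_3-\mathcal{R}_a(\pi,v)\mathcal{R}_a(\theta',u_{c1})))$, bound $\Delta(v,u_{c1})\geq\Delta_W^*$, and evaluate at $|\theta'|=\theta_M$ using $|\sin(\theta'/2)|\geq|\theta'|/\pi$.

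The one real problem is that the identity in your third step is off by a factor of two. Take $R=2vv^\top-I_3$ and $Wv=\lambda v$. Then $WR=2\lambda vv^\top-W$ and $I_3-\mathcal{R}_a(\theta',u_{c1})=-\sin\theta'\,u_{c1}^\times+(1-\cos\theta')(I_3-u_{c1}u_{c1}^\top)$, hence
\begin{align*}
\tr\bigl(WR(I_3-\mathcal{R}_a(\theta',u_{c1}))\bigr)
&=(1-\cos\theta')\bigl(\tr(WR)-u_{c1}^\top WRu_{c1}\bigr)\\
&=-(1-\cos\theta')\Delta(v,u_{c1}).
\end{align*}
Therefore $U_R(R,0)-U_R(R,\theta')=(1-\cos\theta')\Delta(v,u_{c1})-\frac{\gamma}{2}\theta'^2=2\sin^2(\frac{\theta'}{2})\Delta(v,u_{c1})-\frac{\gamma}{2}\theta'^2$, exactly as in the paper. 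As a quick check, for $W=I_3$ and $u_{c1}=v$ the trace difference is $4-2(1+\cos\theta')=2(1-\cos\theta')$, while $\Delta(v,v)=2$.

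This is not cosmetic. From your displayed formula, $1-\cos\theta'\geq 2\theta'^2/\pi^2$ only gives a gap of $(\frac{2\Delta_W^*}{\pi^2}-\gamma)\frac{\theta_M^2}{2}$. That bound does not cover all $(\gamma,\delta_{\bar{X}})$ allowed by \eqref{eqn:P_A}; for instance, it is nonpositive for $\gamma\in[\frac{2\Delta_W^*}{\pi^2},\frac{4\Delta_W^*}{\pi^2})$. So the bound you state does not follow from the identity you wrote. With the corrected identity, your chain yields exactly $(\frac{4\Delta_W^*}{\pi^2}-\gamma)\frac{\theta_M^2}{2}>\delta_{\bar{X}}$.

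On the positivity of $\Delta_W^*$ in the three cases, the paper imports it from \citet[Proposition~2]{berkane2016construction} rather than from \citet{wang2021hybrid}. Your remark that only $W$ and $u_{c1}$ enter $\Delta$ is what justifies that import.
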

	\begin{pf}
		See Appendix A.
	\end{pf}
	\begin{rem}
		According to Lemma \ref{lem:parameters}, the potential function $U$ in \eqref{U_explicit} can be constructed from the conditions of $\mathcal{P}_{\mathbb{A}}$ in \eqref{mu_U_def}. The choice of the angular warping direction $u_{c1} \in \mathbb{S}^2$ and  $\Delta_W^*$ is inspired by \citet[Proposition 2]{berkane2016construction}, which is important for robustness with respect to measurement noise. The choice of the parameters $\Theta$, $\gamma$ and $\delta_{\bar{X}}$ is inspired by \citet[Proposition 2]{wang2021hybrid}, which is the key to constructing the potential function $U$ on $SE(3)\times \mathbb{R}$. Moreover, the choice of the vector $u_{c2} \in \mathbb{R}^3$ is crucial for the design of the potential function $U$ on $SE(3)\times \mathbb{R}$ for a general weighting matrix $\mathbb{A}$. As shown in Lemma \ref{lem:parameters}, the vector $u_{c2}$ in this work is designed in terms of $b, d, u_{c1}$ according to the matrix $\mathbb{A}$.
		This enables the potential function $U$ in \eqref{U_def} can be simplified to \eqref{U_explicit} such that the inequality \eqref{mu_U_def} with a constant positive gap $\delta_{\bar{X}}$ can be easily obtained. 
	\end{rem}
	
	Let $x:=(\bar{X}_1, \theta_1, \dots, \bar{X}_M, \theta_M) \in \mathcal{S}$ denote the extended state with $\mathcal{S}:=(\SE(3) \times \mathbb{R})^M$ denoting the extended state space. Based on the potential function $U$ on $\SE(3) \times \mathbb{R}$ in \eqref{U_def}, we construct the following synergetic potential function $\bar{U}$ on $\mathcal{S}$ with respect to $\mathcal{A}:=\{x\in \mathcal{S}: (\bar{X}_k, \theta_k)=\mathcal{A}_0, \forall k \in \mathcal{M}\}$:
	\begin{align}\label{barU_def}
		\textstyle \bar{U}(x) =\sum_{k=1}^M U(\bar{X}_k, \theta_k).  
	\end{align}
	From \eqref{barU_def}, for each $k \in \mathcal{M}$, the gradients of $\bar{U}$ with respect to $\bar{X}_k$ and $\theta_k$ are given by
	\begin{subequations}\label{simlify}
		\begin{align}
			\nabla_{\bar{X}_k} \bar{U}(x)&=\nabla_{\bar{X}_k} U(\bar{X}_k, \theta_k)  \\
			\nabla_{\theta_k} \bar{U}(x)&=\nabla_{\theta_k} U(\bar{X}_k, \theta_k)
		\end{align}
	\end{subequations}
	Then, the set of all critical points of $\bar{U}$ is given by 
	$\Upsilon_{\bar{U}}:=\{x\in \mathcal{S}: \nabla_{\bar{X}_k} \bar{U}(x)=\mathbf{0}, \nabla_{\theta_k} \bar{U}(x)=0, \forall k \in \mathcal{M}\}$. 
	
	\section{Hybrid Pose Synchronization Scheme}\label{V}
	In this section, by making use of the framework of hybrid dynamical systems  \citep{goebel2009hybrid}, a hybrid feedback control scheme for global pose synchronization of the system \eqref{eqn:kinematic} is considered.
	Given the matrix $\mathbb{A}$ in \eqref{AA_def} and the set $\mathcal{P}_{\mathbb{A}}$ in \eqref{eqn:P_A}, a potential function $U$ satisfying the condition \eqref{mu_U_def} in Lemma \ref{lem:parameters} can be constructed. Then, the hybrid dynamics of the switching variable $\theta_k$ that allow continuous flows and discrete jumps will be designed to avoid the undesired critical points, leaving $\mathcal{A}_0$ as the unique attractor. Inspired by \cite{wang2021hybrid}, for each $k\in \mathcal{M}$, we propose the following hybrid dynamics:
	\begin{align}\label{mathcalHtheta}
		\mathcal{H}_{\theta_k}:
		\left\{ \begin{aligned}
			&\dot{\theta}_k=-k_{\theta}\nabla_{\theta_k}U(\bar{X}_k,\theta_k), \quad ~(\bar{X}_k, \theta_k) \in \mathcal{F}_k\\
			&\theta_k^+ \in g(\bar{X}_k, \theta_k), \quad \quad \quad \quad \quad (\bar{X}_k, \theta_k) \in \mathcal{J}_k 
		\end{aligned}\right.
	\end{align}
	with $k_{\theta}>0$ and the flow and jump sets given by
	\begin{align}\label{set_Fk_J_k}
		&\mathcal{F}_k:=\{(\bar{X}_k, \theta_k) \in \SE(3) \times \mathbb{R}: \mu_{U}(\bar{X}_k, \theta_k) \leq \delta_{\bar{X}} \} \nonumber \\ 
		&\mathcal{J}_k:= \{(\bar{X}_k, \theta_k) \in \SE(3) \times \mathbb{R}: \mu_{U}(\bar{X}_k, \theta_k) \geq \delta_{\bar{X}}\}
	\end{align}
	where $\mu_{U}(\bar{X}_k, \theta_k)$ and $\delta_{\bar{X}}$ are given by Lemma ~\ref{lem:parameters}, and the jump map  $g: \SE(3) \times \mathbb{R} \rightrightarrows \Theta$ is given by
	\begin{align}\label{eqn: gtheta}
		g(\bar{X}_k, \theta_k):=\left\{\theta_k \in \Theta: \theta_k= \arg\min_{\theta_k' \in \Theta}U(\bar{X}_k, \theta'_k) \right\}. 
	\end{align}
	Note that the switching variable $\theta_k$ flows when the state $(\bar{X}_k, \theta_k)$ is away from the undesired critical set $\Upsilon_U \setminus \{\mathcal{A}_0\}$, and jumps to some $\theta_k \in \Theta$, leading to minimum value of $U(\bar{X}_k,\theta_k')$, when the state $(\bar{X}_k, \theta_k)$ is in the neighborhood of the set $\Upsilon_U\setminus \{\mathcal{A}_0 \}$.
	From the definitions of $\mathcal{J}_k$ in \eqref{set_Fk_J_k}, $g$ in \eqref{eqn: gtheta} and $\mu_{U}$ in \eqref{mu_U_def}, one can verify that the inequality $\mu_{U}(\bar{X}_k, \theta_k) \geq \delta_{\bar{X}}$ holds for all $(\bar{X}_k, \theta_k) \in \mathcal{J}_k$, which can guarantee a minimum decrease of the potential function $U$ by a constant gap $\delta_{\bar{X}}$ after each jump. Then, we propose the following distributed hybrid feedback control law:
	\begin{align}\label{hybridcontroller}
		\begin{aligned}
			&\underbrace{
				\begin{aligned}
					u_{i} &= -k_X \sum_{\kappa =1}^M\bar{B}_{i\kappa} \bar{\psi}_{\nabla}(\bar{X}_{\kappa}, \theta_{\kappa})
					-k_{\xi}\xi_i -k_e\sum_{j \in \mathcal{N}_i}(\xi_i-\xi_j) \\[2pt]
					\dot{\theta}_k &= -k_{\theta}\nabla_{\theta_k}U(\bar{X}_k,\theta_k),  
				\end{aligned}
			}_{x \in \mathcal{F}_i:=\{x\in \mathcal{S}:~ \forall \kappa \in \mathcal{M}_i^+,~ (\bar{X}_\kappa, \theta_\kappa) \in \mathcal{F}_\kappa\}} \\[2pt]
			&\underbrace{
				\begin{aligned}
					&\hphantom{u_{i} = -k_X \sum_{k =1}^M\bar{B}_{ik} \bar{\psi}_{\nabla}(\bar{X}_k, \theta_k) 
						-k_{\xi}\xi_i -k_e\sum_{j \in \mathcal{N}_i}(\xi_i-\xi_j) }\\[-8pt]
					&\theta^+_k \in g(\bar{X}_k, \theta_k), 
				\end{aligned}
			}_{x \in \mathcal{J}_i:=\{x\in \mathcal{S}:~ \exists \kappa \in \mathcal{M}_i^+,~ (\bar{X}_\kappa, \theta_\kappa) \in \mathcal{J}_\kappa\}}
		\end{aligned}
	\end{align}
	for all $i \in \mathcal{V}$ and $k \in \mathcal{M}_i^+$, with $ k_X,k_{\xi},k_e>0$, $\mathcal{F}_k$ and $\mathcal{J}_k$ in \eqref{set_Fk_J_k}, $g$ in \eqref{eqn: gtheta} and the incidence matrix $\bar{B}$ in \eqref{Bt_def}.

	\begin{rem}
		The distributed hybrid feedback control law $u_i$ in \eqref{hybridcontroller}  consists of three terms: the first gradient-based relative pose feedback term 
		ensures the pose synchronization for MRBSs; the second velocity feedback term provides a damping that ensures the group velocity $\xi_i$ converging to zero; 
		the last relative velocity feedback term provides relative velocity information for MRBSs that can improve the transient performance.     
	\end{rem}
	
	Define the new states $\bar{x}:=(x, \xi) \in \bar{\mathcal{S}}:=(\SE(3) \times \mathbb{R})^M \times \mathbb{R}^{6N}$ and $\xi:=(\xi_1, \dots, \xi_N)\in \mathbb{R}^{6N}$. From \eqref{eqn:errdynamics} and \eqref{hybridcontroller}, one obtains the following hybrid closed-loop system:
	\begin{align}\label{hybsys}
		\mathcal{H}: \left\{
		\begin{array}{ll} 
			\dot{\bar{x}}~~=\bar{F}(\bar{x}),\quad &\bar{x} \in \bar{\mathcal{F}}:=\{\bar{x} \in \bar{\mathcal{S}}: x \in \mathcal{F} \}\\
			\bar{x}^+ \in \bar{G}(\bar{x}), \quad &\bar{x} \in \bar{\mathcal{J}}:=\{\bar{x} \in \bar{\mathcal{S}}: x \in \mathcal{J} \}
		\end{array} \right.
	\end{align}
	where $\mathcal{F}=\bigcap_{i=1}^N \mathcal{F}_i, \mathcal{J}=\bigcup_{i=1}^N \mathcal{J}_i$ and 
	\begin{align*}\small 
		\bar{F}(\bar{x}):=\begin{pmatrix}
			\bar{X}_1 \bar{\xi}_1^{\wedge} \\
			-k_{\theta}\nabla_{\theta_1}U(\bar{X}_1, \theta_1)\\
			\vdots\\
			\bar{X}_M \bar{\xi}_M^{\wedge} \\
			-k_{\theta}\nabla_{\theta_M}U(\bar{X}_M, \theta_M)\\
			\mathbb{I}_1^{-1}(\ad^{\top}_{\xi_1}\mathbb{I}_1\xi_1+u_{1})\\
			\vdots\\
			\mathbb{I}_N^{-1}(\ad^{\top}_{\xi_N}\mathbb{I}_N\xi_N+u_{N})\\
		\end{pmatrix},
		\bar{G}(\bar{x}) :=	\begin{pmatrix}\bar{X}_1\\
			g(\bar{X}_1, \theta_1)\\
			\vdots\\
			\bar{X}_M\\
			g(\bar{X}_M, \theta_M)\\
			\xi_1\\
			\vdots\\
			\xi_N \end{pmatrix}.
	\end{align*}
	with $g$ defined in \eqref{eqn: gtheta} and $u_i$  defined in \eqref{hybridcontroller}. One can verify that $\bar{\mathcal{F}}\cup \bar{\mathcal{J}}= \bar{\mathcal{S}}$, $\bar{\mathcal{F}}$ and $\bar{\mathcal{J}}$ are closed, and the hybrid  system \eqref{hybsys} is autonomous and satisfies the hybrid basic conditions \citep{goebel2009hybrid}.
	Now, one can state the following main result:
	\begin{thm}\label{T1}
		Consider the hybrid closed-loop system \eqref{hybsys} with $g$ in \eqref{eqn: gtheta} and $u_i$ in \eqref{hybridcontroller}. Let $k_X, k_{\xi}, k_e>0$, and suppose that Assumption \ref{assum:1} holds and the set $\mathcal{P}_\mathbb{A}$ is chosen as per Lemma \ref{lem:parameters}. Then, the closed set $\bar{\mathcal{A}}:=\{\bar{x} \in \bar{\mathcal{S}}:x\in \mathcal{A}, \xi=\mathbf{0} \}$ is globally asymptotically stable for the hybrid closed-loop system \eqref{hybsys} and the number of jumps is finite.
	\end{thm}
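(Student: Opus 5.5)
We take the Lyapunov function candidate
\[
\mathcal{L}(\bar{x}) = k_X\bar{U}(x) + \tfrac{1}{2}\textstyle\sum_{i=1}^N \xi_i^{\top}\mathbb{I}_i\xi_i ,
\]
which is positive definite with respect to $\bar{\mathcal{A}}$ and radially unbounded on $\bar{\mathcal{S}}$. The last property holds because $U$ contains $\frac{\gamma}{2}\theta_k^2$ and $\frac{d}{2}\|\bar p_k-(I_3-\bar R_k)bd^{-1}\|^2$, and $\bar R_k$ lives in a compact set. The argument has three steps.

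\textbf{Flows.} Using \eqref{st_a}, \eqref{identa} and \eqref{simlify}, the derivative of each $U(\bar X_k,\theta_k)$ along $\dot{\bar X}_k=\bar X_k\bar\xi_k^\wedge$, $\dot\theta_k=-k_\theta\nabla_{\theta_k}U$ is
\[
2\bar\xi_k^{\top}\bar{\psi}_{\nabla}(\bar X_k,\theta_k)-k_\theta|\nabla_{\theta_k}U|^2 .
\]
Summing over $k$ and using \eqref{eqn:bar_xi}, the cross terms become $2\xi^{\top}\bar B(t)\bar\psi_\nabla$. The kinetic part gives $\xi_i^{\top}\ad^{\top}_{\xi_i}\mathbb{I}_i\xi_i=0$ by \eqref{propc}. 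Substituting $u_i$ then yields
\[
\dot{\mathcal L}=-k_Xk_\theta\sum_k|\nabla_{\theta_k}U|^2-k_\xi\|\xi\|^2-k_e\,\xi^{\top}(L\otimes I_6)\xi\le 0 ,
\]
where $L$ is the graph Laplacian. Two things must be fixed to make this cancellation exact: the factor $2$ in \eqref{identa}, absorbed by scaling $k_X$ in $\mathcal L$ (e.g.\ $\mathcal L=\tfrac{k_X}{2}\bar U+\dots$), and the sign/transpose convention of $\bar B$.

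\textbf{Jumps.} At a jump only those $\theta_k$ with $(\bar X_k,\theta_k)\in\mathcal J_k$ are reset, by \eqref{eqn: gtheta}. By the definition of $\mu_U$ in \eqref{mu_U_def}, each such reset decreases $U(\bar X_k,\cdot)$ by at least $\mu_U\ge\delta_{\bar X}$. We must also make sure that no $\theta_\kappa$ for an edge outside $\mathcal{J}_\kappa$ is reset. To do this we read the jump map as acting only on the edges in their own jump sets; if $\bar G$ is read more literally, resetting a flowing edge to the argmin still does not increase $U$. Hence $\mathcal L(\bar x^+)\le\mathcal L(\bar x)-k_X\delta_{\bar X}$ on $\bar{\mathcal J}$. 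Since $\mathcal L\ge 0$ and it does not increase during flows, the number of jumps is bounded by $\mathcal L(\bar x(0,0))/(k_X\delta_{\bar X})$. This gives finiteness of jumps, rules out Zeno behavior, and together with the flow bound yields stability of $\bar{\mathcal A}$ and boundedness of solutions. Completeness of maximal solutions follows from the hybrid basic conditions and $\bar{\mathcal F}\cup\bar{\mathcal J}=\bar{\mathcal S}$, since every point is either in the jump set or admits flow via the tangent-cone condition.

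\textbf{Attractivity.} Apply the invariance principle for hybrid systems \citep{goebel2009hybrid}. Solutions eventually only flow, so they converge to the largest weakly invariant subset of
\[
\{\bar x\in\bar{\mathcal F}:\dot{\mathcal L}=0\}=\{\xi=\mathbf 0,\ \nabla_{\theta_k}U=0\ \forall k\}.
\]
On this set $\dot\xi\equiv0$, so $u_i=0$ for all $i$, which means $\bar B(t)\bar\psi_\nabla=\mathbf 0$. Lemma~\ref{lem: HE} then gives $\bar\psi_\nabla(\bar X_k,\theta_k)=\mathbf 0$ for every $k$. By \eqref{st_b} and invertibility of $\Ad$, together with $\nabla_{\theta_k}U=0$, this means each $(\bar X_k,\theta_k)$ is a critical point, i.e.\ lies in $\Upsilon_U$. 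The flow set requires $\mu_U\le\delta_{\bar X}$, whereas Lemma~\ref{lem:parameters} gives $\mu_U>\delta_{\bar X}$ on $\Upsilon_U\setminus\{\mathcal A_0\}$. Hence $(\bar X_k,\theta_k)=\mathcal A_0$ for every $k$, so the invariant set is $\bar{\mathcal A}$ and GAS follows.

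The main obstacle is the flow computation in the first step. It requires showing that $\sum_k \dot U_k$ produces exactly $\xi^{\top}\bar B(t)\bar\psi_\nabla$, using $\dot{\bar X}_k=\bar X_k\bar\xi_k^\wedge$ with $\bar\xi_k=\xi_i-\Ad^{-1}_{\bar X_k}\xi_j$ and the transpose $\Ad^{-\top}_{\bar X_k}$ in \eqref{Bt_def}, and that the relative-velocity damping term is negative semidefinite.
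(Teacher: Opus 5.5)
Your proof is the paper's proof. It uses the same Lyapunov function, the same cross-term cancellation through $\bar\xi=\bar B(t)^{\top}\xi$, the same decrease of $k_X\delta_{\bar X}$ per jump, and the same invariance argument through Lemma~\ref{lem: HE} and Lemma~\ref{lem:parameters}.

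On the flow step: the paper takes $\bar V=k_X\bar U+\sum_i\xi_i^{\top}\mathbb{I}_i\xi_i$, with no $\tfrac12$ on the kinetic term. That is exactly twice your rescaled $\mathcal L$. With \eqref{Bt_def} and \eqref{eqn:bar_xi} as written, the terms $2k_X\xi^{\top}\bar B(t)\Psi^{\bar X}_{\nabla}$ cancel exactly, so no sign or transpose fix is needed.

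One remark in your jump step is false and should be dropped. You claim that resetting an edge that is not in its own jump set to $\arg\min_{\Theta}U$ "still does not increase $U$". This fails because $\theta_k$ evolves continuously and in general is not in $\Theta$.

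Concretely, at $(\bar X_k,\theta_k)=\mathcal{A}_0$ you have $U=0$. But $\min_{\theta'\in\Theta}U(I_4,\theta')\geq\min_{\theta'\in\Theta}\tfrac{\gamma}{2}\theta'^2>0$, since $0\notin\Theta$. Hence $\mu_U(\mathcal A_0)<0$, and a reset strictly increases $U$; the paper itself notes $\mu_U\le 0$ on $\mathcal A$.

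So under the literal $\bar G$, which resets every $\theta_k$, the bound $\mathcal L(\bar x^+)\le\mathcal L(\bar x)-k_X\delta_{\bar X}$ can fail. The same is true of the per-agent rule in \eqref{hybridcontroller}, which resets all $\theta_k$ with $k\in\mathcal M_i^+$ as soon as one of them enters its jump set. In that case an edge with $\mu_U\ge\delta_{\bar X}$ can be outweighed by a neighbouring edge with $\mu_U<0$.

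The argument genuinely needs your primary reading: only edges with $\mu_U(\bar X_k,\theta_k)\ge\delta_{\bar X}$ are reset. This is what the paper's sum over $\mathcal M'$ tacitly assumes. State that reading as the jump rule and delete the fallback.
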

	\begin{pf}
		Consider the following Lyapunov function candidate:
		\begin{align}\label{V_def}
			\textstyle \bar{V}(\bar{x})=k_X\bar{U}(x)+\sum_{i=1}^N \xi_i^{\top} \mathbb{I}_i \xi_i.
		\end{align}
		From \eqref{eqn:errdynamics},\eqref{U_def},\eqref{barU_def},\eqref{mathcalHtheta} and Lemma ~\ref{lem:Uproperty}, the time-derivative of $U(\bar{X}_k, \theta_k)$ is given by
		\begin{align}\label{U_derivative}
			&\dot{U}(\bar{X}_k, \theta_k) \nonumber \\
			&\qquad =\langle \nabla_{\bar{X}_k}U(\bar{X}_k, \theta_k), \bar{X}_k\bar{\xi}_k^{\wedge} \rangle_{\bar{X}_k} +\langle\langle \nabla_{\theta_k}U(\bar{X}_k, \theta_k), \dot{\theta}_k \rangle\rangle  \nonumber \\
			&\qquad=\langle\langle \bar{X}_k^{-1}\nabla_{\bar{X}_k}U(\bar{X}_k, \theta_k), \bar{\xi}_k^{\wedge} \rangle\rangle+\nabla_{\theta_k}U(\bar{X}_k, \theta_k)\dot{\theta}_k  \nonumber  \\
			&\qquad=2\bar{\xi}_k^{\top} \bar{\psi}_{\nabla}(\bar{X}_k, \theta_k)-k_{\theta}|\nabla_{\theta_k}U(\bar{X}_k, \theta_k)|^2
		\end{align}
		it follows from \eqref{barU_def} that
		\begin{align}\label{eqn:dot_bar_U}
			\dot{\bar{U}}(x)
			&\textstyle =\sum_{k=1}^M (2\bar{\xi}_k^{\top} \bar{\psi}_{\nabla}(\bar{X}_k, \theta_k)-k_{\theta}|\nabla_{\theta_k}U(\bar{X}_k, \theta_k)|^2) \nonumber \\
			&=2\bar{\xi}^{\top}\Psi^{\bar{X}}_{\nabla}(x)-k_{\theta}\|\Psi_{\nabla}^{\theta}(x)\|^2 \nonumber \\ 
			&=2\xi^{\top}\bar{B}(t)\Psi^{\bar{X}}_{\nabla}(x)-k_{\theta}\|\Psi_{\nabla}^{\theta}(x)\|^2
		\end{align}
		where $\Psi^{\bar{X}}_{\nabla}(x):=[\bar{\psi}_{\nabla}(\bar{X}_1, \theta_1)^{\top},\dots,
		\bar{\psi}_{\nabla}(\bar{X}_M, \theta_M)^{\top}]^{\top} 
		\in \mathbb{R}^{6M}$ and $\Psi_{\nabla}^{\theta}(x):=[\nabla_{\theta_1}U(\bar{X}_1, \theta_1),\dots,
		\nabla_{\theta_M}U(\bar{X}_M, \theta_M)]^{\top}\\ \in \mathbb{R}^M$ and  we made use of the facts \eqref{identa}, \eqref{eqn:bar_xi} and \eqref{simlify}. 
		
		From \eqref{eqn:errdynamics} and \eqref{hybridcontroller}, one can show that
		\begin{align}\label{eqn:sum_dot_xi_i}
			\textstyle \sum_{i=1}^N \xi_i^{\top}\mathbb{I}_i \dot{\xi}_i
			&\textstyle =-k_X\sum_{i=1}^N \xi_i^{\top} \sum_{\kappa =1}^M\bar{B}_{i\kappa} \bar{\psi}_{\nabla}(\bar{X}_{\kappa}, \theta_{\kappa})\nonumber \\  
			&\textstyle ~~~~ -\sum_{i=1}^N \xi_i^{\top} \big(k_{\xi}\xi_i + k_e \sum_{j \in \mathcal{N}_i}(\xi_i-\xi_j) \big)  \nonumber \\
			&\textstyle = -k_X\xi^{\top}\bar{B}(t)\Psi^{\bar{X}}_{\nabla}(x)-k_{\xi}\| \xi \|^2-k_e\|\mathcal{B} \xi \|^2
		\end{align}
		where $\mathcal{B}:= B^{\top} \otimes I_6$ with $\otimes$ denoting the Kronecker product, and we made use of the facts: $\ad_{\xi_i}\xi_i=\mathbf{0}$ in \eqref{propc} and $\sum_{i=1}^N \xi_i^{\top}\sum_{j \in \mathcal{N}_i}(\xi_i-\xi_j)=\xi^{\top}(BB^{\top}\otimes I_6)\xi  =\|\mathcal{B} \xi \|^2$ with $BB^{\top} \in \mathbb{R}^{N\times N}$ known as the Laplacian matrix.
		Then, by virtue of \eqref{eqn:dot_bar_U} and \eqref{eqn:sum_dot_xi_i}, one obtains
		\begin{align}\label{eqn:re_dot_V}
			\dot{\bar{V}}(\bar{x}) 
			& \textstyle = k_X\dot{\bar{U}}(x)+2\sum_{i=1}^N \xi_i^{\top} \mathbb{I}_i \dot{\xi}_i \nonumber \\
			& = -k_Xk_{\theta}\|\Psi_{\nabla}^{\theta}(x)\|^2-2k_{\xi}\| \xi \|^2-2k_e\|\mathcal{B}  \xi \|^2  
		\end{align}
		for all $\bar{x}\in \bar{\mathcal{F}}$. This implies that $\dot{\bar{V}}(\bar{x}) \leq 0$ and $\bar{V}(\bar{x})$ is non-increasing along the flow of  \eqref{hybsys}.

		On the other hand, for each jump $\bar{x} \in \bar{\mathcal{J}}$, from the definition of $\bar{\mathcal{J}}$ there exists a nonempty set $\mathcal{M}' \subseteq \mathcal{M}$ and at least one edge $k \in \mathcal{M}$ such that $\mu_U(\bar{X}_k, \theta_k) \geq \delta_{\bar{X}}$ with $\mu_U$ given by \eqref{mu_U_def}. Together with \eqref{barU_def}, one has 
		\begin{align}\label{ineq:V_diff}
			&\bar{V}(\bar{x}^+)-\bar{V}(\bar{x})=k_X\bar{U}(x^+)-k_X\bar{U}(x)  \nonumber\\
			&\textstyle \qquad \quad  =-k_X \sum_{k\in \mathcal{M}'}(U(\bar{X}_k, \theta_k)-\min_{\theta_k' \in \Theta}U(\bar{X}_k, \theta'_k)) \nonumber \\ 
			&\textstyle \qquad \quad =-k_X\sum_{k\in \mathcal{M}'} \mu_U(\bar{X}_k, \theta_k)  \nonumber \\ 
			&\textstyle \qquad \quad \leq -k_X \sum_{k\in \mathcal{M}'} \delta_{\bar{X}}\leq -k_X \delta_{\bar{X}}
		\end{align}
		for all $\bar{x}\in \bar{\mathcal{J}}$. This implies that $\bar{V}(\bar{x})$ is strictly decreasing
		over the jumps of \eqref{hybsys}. In view of \eqref{eqn:re_dot_V} and \eqref{ineq:V_diff}, it follows from \citet[Theorem 23]{goebel2009hybrid} that $\bar{\mathcal{A}}$ is stable, and hence every maximal solution to the hybrid closed-loop system \eqref{hybsys} is bounded. Inspired by the Proof of \citet[Theorem 1]{wang2021hybrid}, one can obtain from \eqref{eqn:re_dot_V} and \eqref{ineq:V_diff} that $\bar{V}(\bar{x}(t,j)) \leq \bar{V}(\bar{x}(t_j,j))  \leq \bar{V}(\bar{x}(t_j,j-1))-k_X\delta_{\bar{X}}$ for all $(t,j),(t_j,j),(t_j,j-1) \in \dom \bar{x}$ with $(t,j) \succeq (t_j,j) \succeq (t_j,j-1)$. 
		Therefore, it is clear that $0 \leq \bar{V}(\bar{x}(t,j)) \leq \bar{V}(\bar{x}(0,0))-j k_X\delta_{\bar{X}}$ for all $(t,j)\in \dom \bar{x}$, which leads to  $j \leq \lceil \bar{V}(\bar{x}(0,0))/k_X\delta_{\bar{X}} \rceil$, where $\lceil \cdot \rceil$ denotes the ceiling function. 
		Thus, one concludes that the number of jumps is finite and depends on the initial conditions. 
		
		Furthermore, applying the invariance principle for hybrid systems \citep{goebel2009hybrid}, it follows that any solution to the hybrid closed-loop system \eqref{hybsys} converges to the largest invariant set contained in $\varGamma:= \{\bar{x} \in \bar{\mathcal{F}}: \Psi_{\nabla}^{\theta}(x)=\mathbf{0}, \xi=\mathbf{0}\}$. For each $\bar{x}\in \varGamma$, from $\xi \equiv \mathbf{0}$, it follows that $\dot{\xi}=\mathbf{0}$. This implies $\sum_{\kappa = 1}^M\bar{B}_{i\kappa} \bar{\psi}_{\nabla}(\bar{X}_{\kappa}, \theta_{\kappa})=\mathbf{0}$ (\ie, $\bar{B}(t)\Psi_{\nabla}^{\bar{X}}(x)=\mathbf{0}$) from \eqref{eqn:errdynamics} and \eqref{hybridcontroller}.
		Then, by virtue of Lemma \ref{lem: HE}, it follows that $\Psi_{\nabla}^{\bar{X}}(x)=\mathbf{0}$. From $\Psi_{\nabla}^{\bar{X}}(x)=\mathbf{0}$ and $\Psi_{\nabla}^{\theta}(x)=\mathbf{0}$, one has $x \in \Upsilon_{\bar{U}}$. Consequently, any solution to the hybrid closed-loop system \eqref{hybsys} converges to the largest weakly invariant set contained in $\varGamma':=\{\bar{x} \in \bar{\mathcal{S}}: x \in \mathcal{F} \cap \Upsilon_{\bar{U}}, \xi=\mathbf{0}\}$.
		On the other hand, given $x \in \mathcal{A}$, one has, for all $k \in \mathcal{M}, \mu_U(\bar{X}_k, \theta_k)=-\min_{\theta_k' \in \Theta}U(\bar{X}_k, \theta_k') \leq 0$, which implies that $\mathcal{A} \subset \mathcal{F} \cap \Upsilon_{\bar{U}}$ and $\mathcal{F} \cap (\Upsilon_{\bar{U}} \setminus \mathcal{A})=  \varnothing$. In addition, applying some set-theoretic arguments, one has $\mathcal{F} \cap \Upsilon_{\bar{U}} \subset (\mathcal{F} \cap (\Upsilon_{\bar{U}} \setminus \mathcal{A})) \cup (\mathcal{F} \cap \mathcal{A})= \varnothing \cup \mathcal{A}$. From $\mathcal{A} \subset \mathcal{F} \cap \Upsilon_{\bar{U}}$ and $\mathcal{F} \cap \Upsilon_{\bar{U}} \subset \mathcal{A}$, one has $\mathcal{F} \cap \Upsilon_{\bar{U}}=\mathcal{A}$. Thus, one can verify that $\varGamma'= \bar{\mathcal {A}}$. Note that every maximal solution to the hybrid system \eqref{hybsys} is bounded,  $\bar{F}(\bar{x}) \subset T_{\bar{\mathcal{F}}}(\bar{x})$ for any $\bar{x} \in \bar{\mathcal{F}} \setminus \bar{\mathcal{J}}$ with $T_{\bar{\mathcal{F}}}(\bar{x})$ denoting the tangent cone to $\bar{\mathcal{F}}$ at the point $\bar{x}$, and $\bar{G}(\bar{x}) \subset\bar{\mathcal{F}} \cup \bar{\mathcal{J}}=\bar{\mathcal{S}}$.  Therefore, in terms of \citet[Theorem S3]{goebel2009hybrid}, it is clear that every maximal solution to \eqref{hybsys} is complete.
		Finally, together with the fact that the hybrid closed-loop system \eqref{hybsys} satisfies the basic hybrid conditions \citep{goebel2009hybrid}, one can conclude the set $\bar{\mathcal{A}}$ is globally asymptotically stable. This completes the proof.
	\end{pf}
	
	From the potential function $\bar{U}$ given in \eqref{barU_def} with $U$ constructed as \eqref{U_def},  together with \eqref{st_b}, \eqref{st_c} and \eqref{eqn: gtheta}, the explicit form of the distributed hybrid feedback law \eqref{hybridcontroller} can be derived as follows: 
	\begin{align}\label{ex_controller}
		\begin{aligned}
			&\underbrace{
				\begin{aligned}
					u_{i}&\textstyle  =-k_X  
					\sum\nolimits_{\kappa \in \mathcal{M}_i} \bar{B}_{i\kappa} \Ad^{-\top}_{\mathcal{T}_{u_c}(\theta_\kappa)}\bar{\psi}((I_4-\mathcal{T}_{u_c}^{-1}(\theta_\kappa)\bar{X}_\kappa^{-1})\mathbb{A})\\
					&\textstyle  \quad -k_{\xi}\xi_i -k_e\sum\nolimits_{j \in \mathcal{N}_i}(\xi_i-\xi_j) \\  
					\dot{\theta}_k&=-k_{\theta}\big(
					\gamma \theta_k+2u^{\top}\bar{\psi}((I_4-(\bar{X}_k \mathcal{T}_{u_c}(\theta_k))^{-1})\mathbb{A})
					\big) 
				\end{aligned} 
			}_{x\in \mathcal{F}_i }\\  
			&\underbrace{
				\begin{aligned}
					&\hphantom{ u_{i}\textstyle  =-k_X  
						\sum\nolimits_{\kappa \in \mathcal{M}_i} \bar{B}_{i\kappa} \Ad^{-\top}_{\mathcal{T}_{u_c}(\theta_\kappa)}\bar{\psi}((I_4-\mathcal{T}_{u_c}^{-1}(\theta_\kappa)\bar{X}_\kappa^{-1})\mathbb{A})}\\[-8pt]
					&\theta^+_k  \in \{\theta_k \in \Theta: \theta_k= \arg \min\nolimits_{\theta_k' \in \Theta}U(\bar{X}_k, \theta'_k)\}  
				\end{aligned}
			}_{x\in \mathcal{J}_i  } 
		\end{aligned}
	\end{align}
	for all $i\in \mathcal{V}, k \in \mathcal{M}_i^+ $, where we made use of the fact from \eqref{Bt_def} that $\bar{B}_{i\kappa}=\mathbf{0}$ for all $\kappa \notin \mathcal{M}_i=\mathcal{M}_i^+ \cup \mathcal{M}_i^-$.
	
	\section{SIMULATION}
	In this section, numerical simulation results are provided to illustrate the performance of the proposed distributed hybrid feedback controller \eqref{ex_controller}. We consider a set of 6 fully actuated UAVs with the undirected communication topology shown in Fig. \ref{fig:graph} (a), and the neighbor sets are given as $\mathcal{N}_1=\{2\}, \mathcal{N}_2=\{1,3,5\}, \mathcal{N}_3=\{2,4\}, \mathcal{N}_4=\{3\}, \mathcal{N}_5=\{2,6\}$ and $\mathcal{N}_6=\{5\}$. We assign an arbitrary orientation to the graph $\mathcal{G}$ as shown in Fig.  \ref{fig:graph} (b). For each UAV system $i\in \mathcal{V}$, the mass and inertia matrix are taken as $m_i=2.4 \kg$ and $J_i=$diag$ ([0.043,0.041,0.082]) \kg \cdot \m^2$.
	
	\begin{figure}[!ht]
		\centering
		\includegraphics[width=0.42\textwidth]{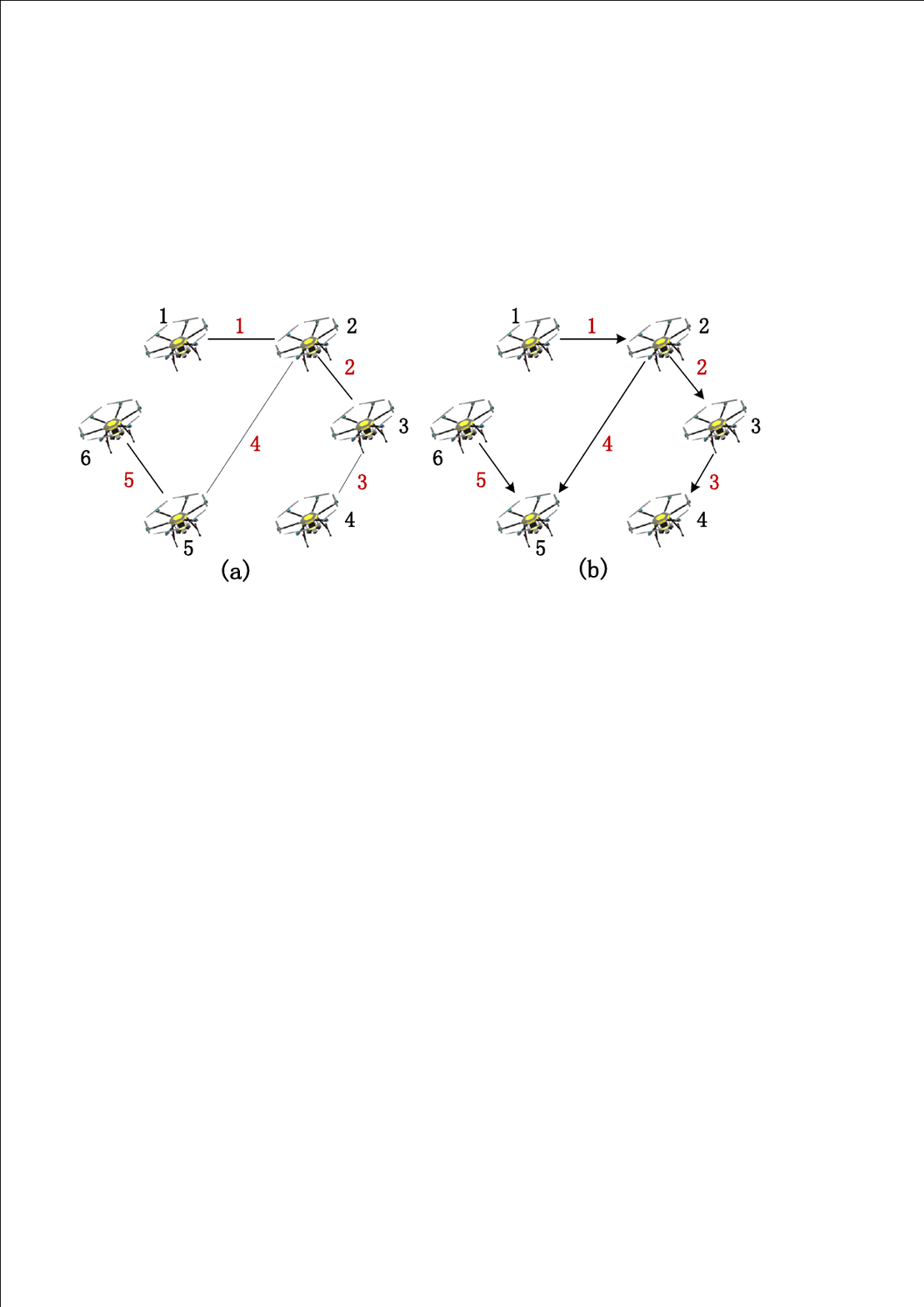}
		\vspace{-0.16in}
		\caption{The topology graph $\mathcal{G}$: (a) undirected; (b) undirected graph with an arbitrary orientation.} 
		\label{fig:graph} 
	\end{figure}
	
	Consider the matrix $\mathbb{A}$ in \eqref{AA_def} for the proposed potential function $U$ in  \eqref{U_def} as
	$$A=\begin{bmatrix}
		69.02&   56.08  & 61.19  \\
		56.08 &  56.25  & 51.17 \\
		61.19 &  51.17  & 57.66  
	\end{bmatrix}, b= \begin{bmatrix}
		18.24\\ 
		14.43\\
		15.96
	\end{bmatrix}, d=5$$ 
	and choose the set $\mathcal{P}_{\mathbb{A}}$ in \eqref{eqn:P_A} as  
	$\Theta=\{ 0.3 \pi\}, u_{c1}=[0.11, 0.99, 0.04]^{\top},  u_{c2} =-u_{c1}^{\times}bd^{-1}=[-3.06,0.20,3.32]^{\top}, \\ \gamma=0.33, \delta_{\bar{X}}=0.02$ and $\Delta_W^*=\lambda_1^{W}=0.9$. In addition, we consider the following initial conditions: 
	$\xi(0)=0, \theta_k(0)=0$ for $k\in \mathcal{M}$, $R_1(0)=\mathcal{R}_a(-\frac{\pi}{2}, v), R_2(0)=\mathcal{R}_a(\frac{\pi}{2}, v), R_3(0)=\mathcal{R}_a(-\frac{\pi}{2}, v)$,
	$R_4(0)=\mathcal{R}_a(\frac{\pi}{2}, v), R_5(0)=\mathcal{R}_a(-\frac{\pi}{2}, v), R_6(0)=\mathcal{R}_a(\frac{\pi}{2}, v)$, $v=[ 0.2686, 0.8549, 0.4438]^{\top}$, $p_i(0)=(I_3-R_i(0))bd^{-1}$ for $i \in \mathcal{V}$. Note that the initial states are specifically chosen to ensure that the relative pose of each edge is close to one of the undesired equilibria.
	The gains of the distributed hybrid feedback controller \eqref{ex_controller} are set to $k_X = 100, k_{\xi} = 1, k_e = 0.6, k_{\theta} = 1$. From Fig.~\ref{fig:simulation results}, one can see that the states $\bar{R}_k, \bar{p}_k, \omega_i$ and $ v_i$ for every $k \in \mathcal{M}$ and $i\in \mathcal{V}$ converge to zero as $t\rightarrow \infty$. Moreover, the variable $\theta_k$ for every $k \in \mathcal{M}$ jumps from $0$ to $0.3 \pi$ at $t=0$, and then converges to zero as $t\rightarrow \infty$.
	
	\begin{figure}[!ht]
		\centering
		\includegraphics[width=0.48\textwidth]{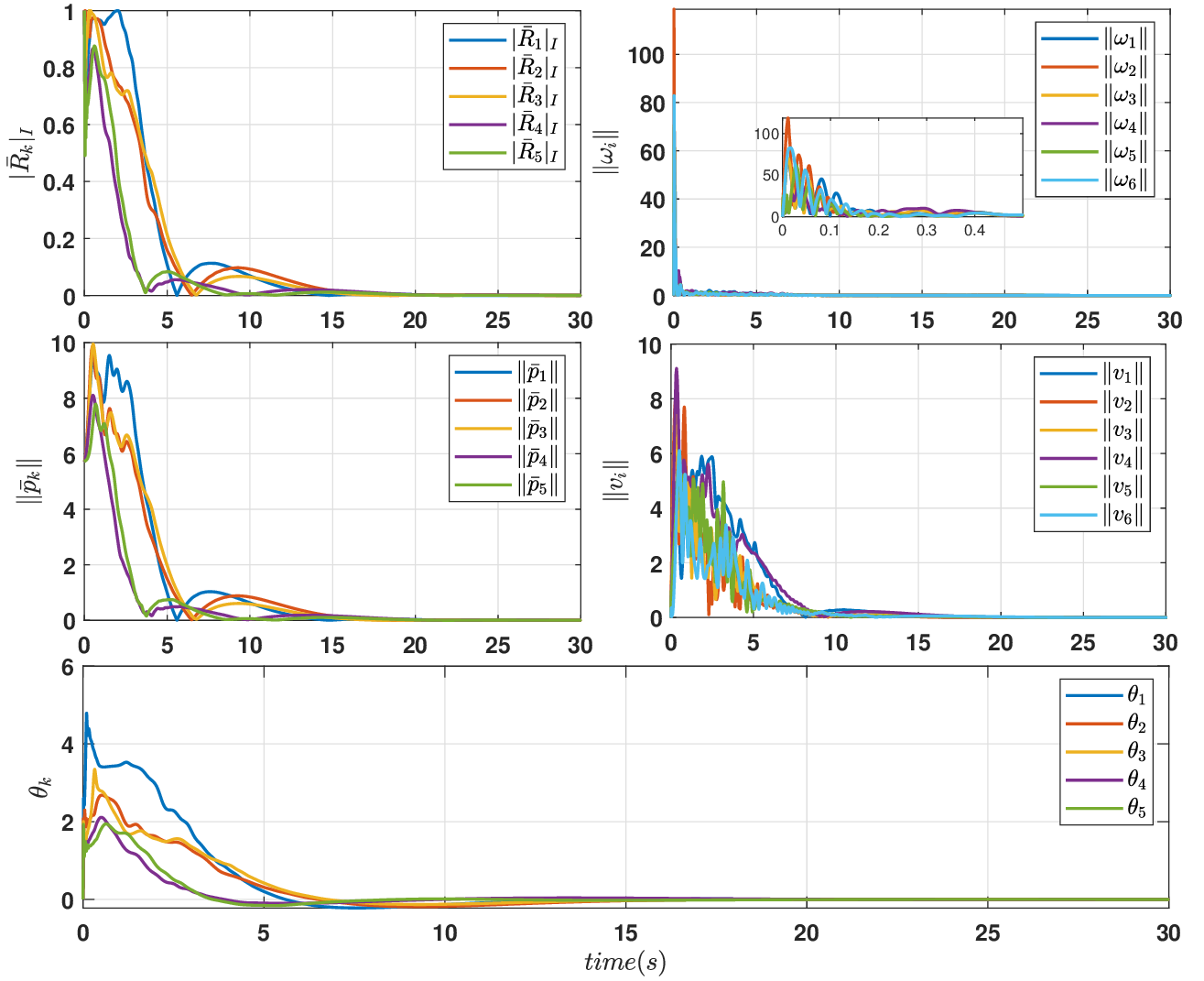}
		\label{simulation results} \\[-0.01in]
		\vspace{-0.1in}
		\caption{Simulation results for the distributed hybrid feedback \eqref{ex_controller}.} 
		\label{fig:simulation results}
	\end{figure}

	\section{Conclusion}
	We proposed a new distributed hybrid pose synchronization scheme, with GAS guarantees, on $\SE(3)$ over undirected, connected, and acyclic graphs. A new potential function on $(\SE(3) \times \mathbb{R})^{M}$ for a general weighting matrix $\mathbb{A}$, involving a set of variables $\theta_k$ with flow and jump dynamics, was constructed to help with the design of the proposed hybrid control scheme. The proposed feedback control scheme, designed on Lie group $\SE(3)$, is restricted to fully actuated rigid body systems. Since many UAVs are underactuated (for instance, quadrotor UAVs), the pose control scheme proposed in this paper cannot be applied directly. Designing global pose synchronization schemes on  $\SE(3)$, for underactuated rigid body systems, over general communication graph topologies, including directed graphs, is an interesting future work.

	
	\appendix
	\section{Proof of Lemma \ref{lem:parameters}}  
	The proof of this lemma is inspired by \citet[Proposition 2]{wang2021hybrid}. 
	From  \eqref{def:Upsilon_V} and \eqref{U_cri}, the set of undesired critical points is given by $ \Upsilon_U \setminus \{\mathcal{A}_0\} = \{(\bar{X}_k,\theta_k) \in SE(3) \times \mathbb{R}: \bar{X}_k=\mathcal{T}(\bar{R}_k, \bar{p}_k), \bar{R}_k=\mathcal{R}_a(\pi, v), \bar{p}_k=(I_3-R_a(\pi, v))bd^{-1}, v\in \mathcal{E}_v(W), \theta_k=0\}$. This implies that, for any $(\bar{X}_k, \theta_k) \in \Upsilon_U \setminus \{\mathcal{A}_0\}$, one has $\theta_k = 0$, $\bar{X}_k=\mathcal{T}(\bar{R}_k, \bar{p}_k)$ with $ \bar{R}_k=\mathcal{R}_a(\pi, v), \bar{p}_k =(I_3-R_a(\pi, v))bd^{-1})$ and $v \in \mathcal{E}_v(W)$. Then, from \eqref{U_explicit}, for any $(\bar{X}_k, \theta_k) \in \Upsilon_U \setminus \{\mathcal{A}_0\}$ with $\bar{X}_k=\mathcal{T}(\bar{R}_k, \bar{p}_k)$ and $\theta_k' \in \Theta \subset \mathbb{R}$, one obtains
	\begin{align}\label{U(pi,v,),0}
		U(\bar{X}_k, \theta_k)
		& \textstyle =\tr(W(I_3-\bar{R}_k\mathcal{R}_a(\theta_k, u_{c1})))+\frac{\gamma}{2}\theta_k^2 \nonumber\\
		& \textstyle ~~~~+\frac{d}{2}\|\bar{p}_k - (I_3 - \bar{R}_k)bd^{-1}\|^2 \nonumber \\
		&=\tr(W(I_3-\mathcal{R}_a(\pi,v))) 
	\end{align}
	and
	\begin{align}\label{U(theta,v,),theta}
		\small
		U(\bar{X}_k,\theta_k') 
		& \textstyle =\tr(W(I_3-\bar{R}_k\mathcal{R}_a(\theta_k', u_{c1})))+\frac{\gamma}{2}\theta_k'^2 \nonumber\\
		& \textstyle ~~~~+\frac{d}{2}\|\bar{p}_k - (I_3 - \bar{R}_k)bd^{-1}\|^2 \nonumber \\
		& \textstyle =\tr(W(I_3-\mathcal{R}_a(\pi,v)\mathcal{R}_a(\theta_k',u_{c1}))) +\frac{\gamma}{2}\theta_k'^2  \nonumber\\
		&=\tr(W(I_3-\mathcal{R}_a(\pi,v))) \nonumber \\
		& \textstyle ~~~~ +\tr(W\mathcal{R}_a(\pi,v)(I_3-\mathcal{R}_a(\theta_k',u_{c1})))+\frac{\gamma}{2}\theta_k'^2  \nonumber\\
		& \textstyle =U(\bar{X}_k,\theta_k)-2\sin^2(\frac{\theta_k'}{2})\Delta(v,u_{c1}) +\frac{\gamma}{2}\theta_k'^2  
	\end{align}
	where we have made use of the facts: $\mathcal{R}_a(\theta_k',u_{c1})=I_3+\sin (\theta_k') u_{c1}^{\times}+(1-\cos(\theta_k'))(u_{c1}^{\times})^2$ and
	$\tr(W\mathcal{R}_a(\pi,v)(I_3-\mathcal{R}_a(\theta_k',u_{c1}))
	=-2\sin^2(\frac{\theta_k'}{2})\Delta(v,u_{c1})$ with $\Delta(v,u_{c1})=u_{c1}^{\top}(\tr(W)I_3-W-2v^{\top}Wv(I_3-vv^{\top}))u_{c1}$ for all $\theta_k' \in \mathbb{R}, v\in \mathcal{E}_v(W), u_{c1}\in \mathbb{S}^2$. Given the choice of $u_{c1} \in \mathbb{S}^2$ as per the set $\mathcal{P}_{\mathbb{A}}$ in \eqref{eqn:P_A}, it  follows from \citet[Proposition 2]{berkane2016construction} that $\Delta^*=\min_{v\in \mathcal{E}_v(W)}\Delta(v,u_{c1})>0$.

	Therefore, from \eqref{U(pi,v,),0} and \eqref{U(theta,v,),theta}, for any $(\bar{X}_k, \theta_k) \in \Upsilon_U \setminus \{\mathcal{A}_0\}$ and $\Theta = \{|\theta_{\jmath}| \in (0,\pi], \jmath=1,\dots,m\} \subset \mathbb{R}$, the following inequality holds
	\begin{align}
		\mu_U(\bar{X}_k, \theta_k)& \textstyle =U(\bar{X}_k,\theta_k)-\min_{\theta_k' \in \Theta} U(\bar{X}_k,\theta_k')\nonumber\\
		& \textstyle  =\max_{\theta_k' \in \Theta}\big(2\sin^2(\frac{\theta_k'}{2})\Delta(v,u_{c1})-\frac{\gamma}{2}\theta_k'^{2}\big)\nonumber\\
		& \textstyle \geq \max_{\theta_k' \in \Theta}\big(2\sin^2(\frac{\theta_k'}{2})\Delta^*-\frac{\gamma}{2}\theta_k'^{2}\big)\nonumber\\
		& \textstyle \geq 2\sin^2(\frac{\theta_M}{2})\Delta^*-\frac{\gamma}{2}\theta_M^2 \nonumber \\
		& \textstyle \geq (\frac{4\Delta^*}{\pi^2}-\gamma)\frac{\theta_M^2}{2}>\delta_{\bar{X}}
	\end{align}
	where we made use of the facts $\frac{4\Delta^*}{\pi^2}-\gamma>0$, $\delta_{\bar{X}}<(\frac{4\Delta^*}{\pi^2}-\gamma)\frac{\theta_M^2}{2}$ and $\theta_M=\sup_{\theta_k' \in \Theta}|\theta_k'| \leq \pi$ given by \eqref{eqn:P_A}, and $2\sin^2(\frac{\theta_k'}{2})\Delta^*-\frac{\gamma}{2}\theta_k'^{2}> 0$, $|\sin(\frac{\theta_k}{2})| \geq \frac{|\theta_k|}{\pi}$ for all $|\theta_k'|\in (0,\pi]$ (\ie, $\theta_k' \in \Theta$). This completes the proof.
	
	\bibliography{ifacconf}
	
\end{document}